  \newtheorem{theorem}{Theorem}[section]
  \newtheorem{lemma}[theorem]{Lemma}
  \newtheorem{remark}[theorem]{Remark}
  \numberwithin{equation}{section}
\def\F{\mathbb{F}}
\def\P{\mathbb{P}}
\def\codeC{\mathscr{C}} 
\def\orb{\mathcal{O}rb} 
\title{Cliques in projective space and construction of cyclic Grassmannian Codes}
\author{Ismael Guti\'errez Garc\'ia}
\address{Department of Mathematics and Statistics, Universidad del Norte, Km 5 via a Puerto Colombia, Barranquilla - Colombia}
\email{isgutier@uninorte.edu.co}
\author{Ivan Molina Naizir}
\address{Department of Mathematics and Statistics, Universidad del Norte, Km 5 via a Puerto Colombia, Barranquilla - Colombia}
\email{inaizir@uninorte.edu.co}
\thanks{ }
\subjclass[2010]{Primary 68R05; Secondary 05C69}
\date{ }
\keywords{Finite fields, subspace codes, Grassmannian codes, orbits, cyclic codes, cliques}
\begin{document}
\maketitle

\begin{abstract}
The construction of Grassmannian codes in some projective space is of highly mathematical nature and requires strong computational power for the resulting searches. In this paper was constructed, using GAP System for Computational Discrete Algebra and Wolfram Mathematica, cliques in the projective space $\P_q(n)$ and then we use these to produce cyclic Grassmannian codes. 
\end{abstract}

\section{Introduction}
Cyclic Grassmannian codes were first presented by A. Kohnert and S. Kurz in \cite{Kurz} from the perspective of design theory  over finite fields. Later T. Etzion and A. Vardy in \cite{Etzion2} defined them as a $q$-analog of cyclic code from the classical coding theory. J. Rosenthal et al. \cite{Rosenthal} and H. Gluesing et al. \cite{Gluesing} studied cyclic codes from the point of view of groups actions. Specifically, they have used an action of the general linear group  over a Grassmannian  to define them: these codes were called cyclic orbits codes. Cyclic Grassmannian codes are a special case of orbits codes. 

Recently T. Etzion et al.  \cite{Etzion1}, K. Otal et al. \cite{Otal1}, B. Chen, and H. Liu \cite{Chen1} presented  new methods for constructing such codes, what includes linearized polynomials, namely subspace polynomials and  Frobenius mappings.  

A computational method for construction of cyclic Grassmannian codes was presented in \cite{Gutierrez-Molina}.

Let $\F_{q^n}$ be the extension field, of degree $n$, of the finite field with $q$ elements, $\F_q$ (where $q$ is a prime power). It is well known that we may regard $\F_{q^n}$ as a vector space of dimension $n$ over $\F_q$. That is, for a fixed basis, we can identify every element of $\F_{q^n}$ with a $n$-tuple of elements in $\F_q$. Therefore, we will not distinguish between $\F_{q^n}$ and $\F_q^n$. We denote with $\P_q(n)$ the projective space of order $n$, that is, the set of all subspaces of $\F_q^n$, including the null space and $\F_q^n$ itself. 

For a fixed natural number $k$, with $0\leq k\leq n$ we denote with $G_q(n,k)$ the set of all subspaces of $\F_q^n$ of dimension $k$ and we call it the $k$-Grassmannian over $\F_q$ or Grassmannian in short. We say that $\codeC\subseteq G_q(n,k)$ is an $(n,M,d,k)_q$ Grassmannian code if $|\codeC| = M$ and $d(X,Y)\geq d$ for all distinct $X,Y\in \codeC$. Such a code is also called a constant dimension code.

Let $\mathcal{A}_q(n,d,k)$ and $\mathcal{C}_q(n,d,k)$ be the maximum number of codewords in an $(n,M,d,k)_q$ grassmannian code over the filed $\F_q$ and the maximum number of codewords
in an $(n,M,d,k)_q$ cyclic code over $\F_q$, respectively. It is clear that $\mathcal{C}_q(n,d,k)\leq \mathcal{A}_q(n,d,k)$.

Let $\alpha\in \F_{q^n}^\ast$ and $V\in G_q(n,k)$. The \texttt{cyclic shift} of $V$ is defined as follows:
\[\alpha V := \{\alpha v\mid v\in V\}.\] 
Clearly $\alpha V$ is  a subspace belonging to $G_q(n,k)$. That is, it has the same dimension as $V$. A Grassmannian code $\codeC\subseteq G_q(n,k)$ is called \texttt{cyclic}, if for all $\alpha\in \F_{q^n}^\ast$ and all subspace $V\in \codeC$ we have that $\alpha V\in \codeC$. The set $\orb(V) :=\{\alpha V\mid \alpha\in \F_{q^n}^\ast\}$ is called the \texttt{orbit} of $V$.

Observe that in this definition the zero vector was omitted from the set of an orbit. Starting now, this will be explicitly deleted when we specify the elements of a codeword of a cyclic Grassmannian  code.

If $V\in G_q(n,k)$, then $|\orb(V)| = \tfrac{q^n-1}{q^t-1}$, 
for some natural number $t$, which divides $n$, see \cite[Lemma 9]{Etzion1}.

\begin{theorem}\label{theorem1}
$$\mathcal{C}_q(n,d,k)=\sum_{t\mid n}{\alpha_t\frac{q^n-1}{q^t-1}}$$
for some integer $0\leq\alpha_t$.
\end{theorem}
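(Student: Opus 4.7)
The plan is to exploit the fact that a cyclic Grassmannian code is, by definition, a union of orbits under the multiplicative action of $\F_{q^n}^\ast$ on $G_q(n,k)$.

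First I would observe that the map $(\alpha, V) \mapsto \alpha V$ defines a group action of $\F_{q^n}^\ast$ on $G_q(n,k)$, so the orbits $\orb(V)$ partition $G_q(n,k)$ into disjoint sets. A Grassmannian code $\codeC \subseteq G_q(n,k)$ is cyclic precisely when it is a union of such orbits; choosing one representative from each orbit contained in $\codeC$ yields a disjoint decomposition
\[
\codeC \;=\; \bigsqcup_{i=1}^{m} \orb(V_i),
\]
for appropriate $V_1,\dots,V_m\in\codeC$.

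Next I would apply the cited Lemma 9 of \cite{Etzion1}, quoted just above the statement: for each $V_i$ there exists a divisor $t_i$ of $n$ such that $|\orb(V_i)| = \frac{q^n-1}{q^{t_i}-1}$. Grouping the orbits according to the value of this divisor and letting
\[
\alpha_t \;:=\; \#\bigl\{\,i \;:\; |\orb(V_i)| = \tfrac{q^n-1}{q^t-1}\,\bigr\},
\]
summation over the partition gives
\[
|\codeC| \;=\; \sum_{i=1}^{m} |\orb(V_i)| \;=\; \sum_{t\mid n} \alpha_t\,\frac{q^n-1}{q^t-1},
\]
with each $\alpha_t$ a nonnegative integer.

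Finally I would invoke the definition of $\mathcal{C}_q(n,d,k)$ as the maximum cardinality of a cyclic $(n,M,d,k)_q$ code: choose an optimal such code $\codeC^\ast$ and apply the decomposition above to it. The coefficients $\alpha_t$ produced by $\codeC^\ast$ give the required identity. There is no real obstacle here beyond recognizing that ``cyclic'' is precisely the condition of being closed under the $\F_{q^n}^\ast$-action, so that cyclic codes are exactly unions of orbits whose sizes are dictated by \cite[Lemma 9]{Etzion1}.
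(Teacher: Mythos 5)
Your proof is correct and follows exactly the argument the paper intends (and leaves implicit): a cyclic code is a disjoint union of orbits under the $\F_{q^n}^\ast$-action, each orbit has size $\frac{q^n-1}{q^t-1}$ for some $t\mid n$ by the cited Lemma 9 of \cite{Etzion1}, and counting orbits by size for a maximum cyclic code gives the stated formula with nonnegative integer coefficients $\alpha_t$. No discrepancy with the paper's approach.
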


\section{Cliques construction}

A clique of an undirected graph $G$ is a complete subgraph of $G$; that is, A clique is a subset of vertices of $G$ such that every two distinct vertices in the clique are adjacent. The clique of the largest possible size is referred to as a maximum clique; that is, it cannot be extended by including one more adjacent vertex. The clique number $\omega(G)$ of G is the number of vertices in a maximum clique in $G$. A clique of size $k$ is called a $k$-clique.

To calculate the coefficients $\alpha_t$ in the previous theorem we proceed as follows:

\begin{enumerate}[(1)]
\item Find all the orbits of $ G_q (n, k) $ and denote this set by $\mathfrak{O}$. That is,
	\[\mathcal{O} := \{\orb(V) \mid V\in G_q(n,k)\}.\]
	
\item Calculate the minimum subspace distance $d_{\orb(\cdot)}$ of each orbit independently; then we form the pair  $(\orb(\cdot), d_{\orb(\cdot)})$.
	
\item A minimum distance $d$ is fixed, for which we want to obtain a cyclic code.
	
\item The graph $\mathcal{G} = (\mathcal{O,E})$ is constructed so that the set $\mathcal{E}$ of edges is obtained in the following way: two orbits are adjacent if their union has a minimum distance greater or equal than $d$. 

\item A clique in the graph $\mathcal{G}$ constructed in (4) is a Grassmannian cyclic code with minimum distance $d$ and dimension $k$.
	
\item To determine the maximum values of each $\alpha_t$, the graph $\mathcal{G}$ is separated into independent subgraphs by the number of spaces in their orbits (every vertex in each subgraph with the same number of associated spaces), and the number of cliques in each one is calculated.
\end{enumerate}

\begin{remark}
To perform the previous algorithm we use:
\begin{enumerate}[(1)]
\item GAP to calculate all the vector spaces over the field $\F_q$;
\item Java to construct the orbits  and graph $\mathcal{G}$;
\item Wolfram Mathematica to calculate the cliques. 
\end{enumerate}
\end{remark}

\newpage

\noindent\textbf{Algorithm 1.}

\begin{algorithm}[H]
\KwData{$d$: the minimum distance required for the code}
\KwResult{Grassmannian cyclic codes with minimum distance $d$.}
	
Let $V := \{O \subseteq G_q(n,k) \mid O \ \text{is an orbit}\}$\;
	$E \gets \{\}$\;
	\ForAll{$O_1 \in V$}{
		\ForAll{$O_2 \in V\setminus \{O_1\}$}{
			\If{ $d_{O_1}\geq d$ {\bf and} $d_{O_2}\geq d$ {\bf and} $D(O_1,O_2)\geq d$}{
				$E \gets E \cup \{(O_1,O_2)\}$\;
			}
		}
	}

36/5000
We define $G$ as the graph of orbits\;
	$G \gets (V,E)$\;
	\ForAll{C in Cliques(G)}{
		print(C)\;
	}
\caption{The algorithm that calculates all the cyclic codes of a Grassmannian }
\end{algorithm}

\vskip0.5cm

\noindent\textbf{Algorithm 2.}

\begin{algorithm}[H]
	\KwData{$n,d,k,q,t$ }
	\KwResult{A bound for $\alpha_t$}
Let $V := \{O \subseteq G_q(n,k) \mid O \ \text{is an orbit}$ {\bf and} $|O|= \frac{q^n-1}{q^t-1}\}$\;
	$E \gets \{\}$\;
	\ForAll{$O_1 \in V$}{
		\ForAll{$O_2 \in V\setminus \{O_1\}$}{
			\If{ $d_{O_1}\geq d$ {\bf and} $d_{O_2}\geq d$ {\bf and} $D(O_1,O_2)\geq d$}{
				$E \gets E \cup \{(O_1,O_2)\}$\;
			}
		}
	}
We define $G$ as the graph of orbits\;
$G \gets (V,E)$\;
	print("$\alpha_t\leq$",NumeroDeClique(G))\; 
\caption{Algorithm that calculates the upper bounds of the values of $\alpha_t$}
\end{algorithm}

\section{Classification of binary Grassmannian codes of length smaller than 6}

\begin{theorem}
$\mathcal{C}_2(4,4,2) = 5$.
\end{theorem}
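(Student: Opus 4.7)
The plan is to combine a packing upper bound with an explicit cyclic construction meeting it.

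First, I would translate the distance condition into an intersection condition. For $X, Y \in G_2(4,2)$ one has $d(X,Y) = \dim X + \dim Y - 2\dim(X\cap Y) = 4 - 2\dim(X\cap Y)$, so $d(X,Y) \geq 4$ is equivalent to $X \cap Y = \{0\}$. Hence $\mathcal{C}_2(4,4,2)$ is the largest cardinality of a cyclic family of pairwise trivially-intersecting $2$-dimensional subspaces of $\F_2^4$ (a cyclic partial spread). The upper bound then follows from a nonzero-vector count: every element of $G_2(4,2)$ contains exactly $2^2-1=3$ nonzero vectors, $\F_2^4$ contains $2^4-1=15$ nonzero vectors, and pairwise trivial intersection forces disjoint nonzero parts, so $\mathcal{C}_2(4,4,2)\leq \mathcal{A}_2(4,4,2)\leq 15/3=5$.

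For the matching lower bound I would invoke Theorem \ref{theorem1}. The divisors of $n=4$ are $t\in\{1,2,4\}$, producing candidate orbit sizes $15$, $5$, and $1$; the value $t=4$ is incompatible with $k=2$ (it would force $V$ to be an $\F_{16}$-subspace, hence $\{0\}$ or $\F_{16}$), so only $t=2$ realises size $5$, corresponding to subspaces whose $\F_{16}^*$-stabiliser is $\F_4^*$, i.e.\ $1$-dimensional $\F_4$-subspaces of $\F_{16}$. Taking $V=\F_4\subseteq\F_{16}$, the orbit $\orb(V)=\{\alpha\F_4\mid\alpha\in\F_{16}^*\}$ has cardinality $15/3=5$; and since $\F_{16}$ is $2$-dimensional over $\F_4$, any two distinct $1$-dimensional $\F_4$-subspaces meet only in $\{0\}$, so $\orb(V)$ is a cyclic $(4,5,4,2)_2$ Grassmannian code. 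Combined with the upper bound this yields $\mathcal{C}_2(4,4,2)=5$.

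Both steps are short. The only substantive observation is that the $\F_4$-linear structure on $\F_{16}$ furnishes a $2$-spread of $\F_2^4$ that is automatically a single $\F_{16}^*$-orbit, so cyclicity is free. I do not anticipate any genuine obstacle; one could also verify the conclusion directly from Algorithm 2 at $(n,q,k,d,t)=(4,2,2,4,2)$, since the single orbit of size $5$ forms a trivial $1$-clique and the two orbits of size $15$ are excluded by the vector count.
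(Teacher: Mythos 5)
Your proposal is correct, and it is stronger than the paper's own argument in one respect. Your lower-bound construction is in substance identical to the paper's: the paper's codeword $\{\alpha^{0},\alpha^{5},\alpha^{10}\}\cup\{0\}$ is exactly the subfield $\F_4\subseteq\F_{16}$, and its $\F_{16}^{*}$-orbit is the size-$5$ spread you describe; you simply verify the orbit size and the pairwise trivial intersections conceptually (via the $\F_4$-linear structure of $\F_{16}$ and the stabiliser $\F_4^{*}$) instead of by computation. Where you genuinely diverge is the upper bound: the paper's theorem statement asserts equality but its proof only exhibits the code, leaving $\mathcal{C}_2(4,4,2)\leq 5$ to the computational orbit classification implicit in its clique method (and invoked again in the following theorem); you instead prove $\mathcal{A}_2(4,4,2)\leq 5$ directly by noting that $d\geq 4$ for $k=2$ forces pairwise trivial intersection and counting nonzero vectors, $15/3=5$. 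This makes your proof self-contained and machine-free, and it even pins down the stronger fact that the spread is optimal among all (not just cyclic) codes with these parameters. The aside ruling out $t=4$ via Theorem \ref{theorem1} is correct but not needed once you have the explicit orbit and the counting bound.
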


\begin{proof}
Let $\alpha$ be a primitive root of $x^4+...$ and use this polynomial to generate the field $\F_{2^{4}}$. Let $\codeC \subseteq G_2(4,2)$  which consists of all cyclic shifts of
\[\{ \alpha^{0}, \alpha^{5}, \alpha^{10}\}.\]
This code  $\codeC$ is an $[4,5,4,2]$-cyclic code.  It consists of a unique orbit with 5 subspaces.
\end{proof}

\begin{theorem}
If $n<6$ then  $\mathcal{C}_2(n,4,k) = 0$.
\end{theorem}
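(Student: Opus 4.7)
The plan is a case analysis on the pairs $(n,k)$ with $n\leq 5$, reducing everything to a single counting argument in one residual case.

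First, I would observe that every pair of distinct subspaces $U,V\in G_2(n,k)$ satisfies $d(U,V)=2k-2\dim(U\cap V)\leq \min(2k,\,2(n-k))$, so the requirement $d\geq 4$ forces $2\leq k\leq n-2$. This already disposes of every $n\leq 3$ and of the extreme values $k\in\{0,1,n-1,n\}$ for $n\in\{4,5\}$, leaving only the pairs $(4,2)$, $(5,2)$, $(5,3)$. The pair $(4,2)$ must be read as the implicit exception to this theorem, since it is handled by the preceding result; and the duality $V\mapsto V^{\perp}$ preserves cyclic shifts and subspace distances, so $(5,3)$ reduces to $(5,2)$.

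For the surviving case $(5,2)$, I would invoke the orbit machinery of Section~2. Since $5$ is prime, Lemma~9 of \cite{Etzion1} forces $|\orb(V)|=\tfrac{2^5-1}{2^t-1}$ with $t\in\{1,5\}$, and no $2$-dimensional $\F_2$-subspace can be an $\F_{32}$-subspace, so every orbit has size exactly $31$. Because every cyclic code is a union of complete orbits, it suffices to show that each single orbit already fails to have internal minimum distance $\geq 4$.

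The decisive step is a double count. For an orbit $\orb(V)$, the union $\bigcup_{\alpha\in\F_{32}^*}\alpha V$ equals $\F_{32}^*$ (since $\F_{32}^*\cdot v=\F_{32}^*$ for each nonzero $v$), so $31$ distinct nonzero vectors are covered by $31$ subspaces carrying $3$ nonzero vectors each, giving $93$ incidences and average multiplicity $3$ per vector. The map $w\mapsto |\{U\in\orb(V):w\in U\}|$ is $\F_{32}^*$-invariant because $\beta\cdot\orb(V)=\orb(V)$, so the multiplicity is in fact constant and equal to $3$. Picking $w\in V\setminus\{0\}$ then produces some $\alpha\neq 1$ with $w\in V\cap \alpha V$, yielding $d(V,\alpha V)\leq 2$; thus no orbit achieves $d=4$ and hence $\mathcal{C}_2(5,4,2)=0$.

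The main obstacle I anticipate is formalizing the ``average equals exact'' promotion in the counting step; this is also precisely where the primality of $n=5$ is doing work, since for composite $n$ (notably $n=4$) intermediate subfields produce smaller orbits whose vectors can be covered with multiplicity one, which is exactly the structure behind the exceptional code $\mathcal{C}_2(4,4,2)=5$ and the reason it cannot be subsumed into the present statement.
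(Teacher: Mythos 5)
Your proof is correct, but it takes a genuinely different route from the paper's. The paper settles this statement in one sentence by appealing to its exhaustive computer search: the enumeration of all orbits for $n<6$ shows that the only orbit with minimum distance $4$ is the one exhibited in the preceding theorem (the spread orbit giving $\mathcal{C}_2(4,4,2)=5$), so all remaining parameters give $0$. You instead give a computer-free argument: the bound $d(U,V)=2k-2\dim(U\cap V)\leq\min(2k,2(n-k))$ reduces everything to $(4,2)$, $(5,2)$, $(5,3)$; duality identifies $(5,3)$ with $(5,2)$; and for $(5,2)$ the primality of $5$ forces every orbit to have full length $31$, after which the incidence count ($31$ subspaces times $3$ nonzero vectors over $31$ nonzero vectors) together with the transitivity of $\F_{32}^{\ast}$ on nonzero vectors shows each nonzero vector lies in exactly $3$ members of the orbit, so two distinct orbit members meet nontrivially and every orbit has minimum distance $\leq 2$. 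This is sound; the promotion from average to exact multiplicity is exactly the standard transitivity argument and needs no further idea. Two remarks: first, the duality step should be set up with the trace form $\langle x,y\rangle=\mathrm{Tr}(xy)$, for which $(\alpha V)^{\perp}=\alpha^{-1}V^{\perp}$, so that duals of cyclic codes are again cyclic; with an arbitrary coordinate identification and the standard dot product this is not automatic, so one line of justification is needed (or one can treat $(5,3)$ directly with a second-moment count). Second, your reading of $(4,2)$ as an implicit exception agrees with the paper's intent, since its own proof concedes that orbit has distance $4$ even though the theorem as literally stated contradicts the preceding one. As for what each approach buys: yours is verifiable by hand, and the multiplicity formula $\tfrac{2^k-1}{2^t-1}$ it produces explains structurally why the $(4,2)$ spread is the unique exception, whereas the paper's enumeration method, though unilluminating here, is the one that scales to the larger parameter sets treated in the later sections.
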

 
\begin{proof}
The unique orbit with minimum distance 4 and $n<6$ was presented in the previous theorem.
\end{proof}

\section{Classification of binary Grassmannian codes of length 6}

 \subsection{Calculating the number  $\mathcal{C}_2(6,6,3)$}
It follows from  Theorem \ref{theorem1}  that 
\[\mathcal{C}_2(6,6,3)=63\alpha_1+21\alpha_2+9\alpha_3.\]

\begin{lemma}
Let $\codeC \subseteq G_2(6,3)$ a cyclic code with minimum distance 6. Then  
\begin{enumerate}[(1)]
\item $\alpha_1 \leq 0$
\item $\alpha_2 \leq 0$
\item $\alpha_3 \leq 1$
\end{enumerate}
\end{lemma}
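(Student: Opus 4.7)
The plan is to reduce everything to the identity $d(V,W) = 6 - 2\dim(V \cap W)$, valid for any two $3$-dimensional subspaces of $\F_2^6$: demanding $d(V,W) \geq 6$ between distinct codewords is equivalent to demanding $V \cap W = \{0\}$. Once this translation is in place, the bounds become purely combinatorial statements about how many $3$-subspaces of $\F_2^6$ can be arranged so as to intersect trivially pairwise.

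For the bounds $\alpha_1 \leq 0$ and $\alpha_2 \leq 0$ I would work inside a single orbit. Suppose $\orb(V)$ has size $N$ and its internal minimum distance is $6$, so its elements pairwise intersect trivially. Each element contributes exactly $2^3 - 1 = 7$ nonzero vectors, and these sets of nonzero vectors must be pairwise disjoint subsets of $\F_2^6 \setminus \{0\}$, forcing $7N \leq 63$ and hence $N \leq 9$. The introductory remark restricts the possible orbit sizes to $N \in \{9, 21, 63\}$ (no $3$-subspace can have orbit size $1$ since $\F_2^6$ has no proper $\F_{2^6}$-subspace of dimension $3$), so $N = 21$ and $N = 63$ are both excluded from any distance-$6$ cyclic code. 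This yields $\alpha_1 = \alpha_2 = 0$.

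For $\alpha_3 \leq 1$ I would exploit that the inequality above becomes an equality when $N = 9$: any orbit of size $9$ contributing to $\codeC$ must actually partition $\F_2^6 \setminus \{0\}$, i.e.\ be a full $3$-spread. If two distinct spread-orbits $O_1 \neq O_2$ both lay inside $\codeC$, pick $V \in O_1$; disjointness of orbits forces $V \notin O_2$, and since $O_2$ is a spread every nonzero $v \in V$ lies in some $W \in O_2$ with $W \neq V$, yielding $V \cap W \neq \{0\}$ and $d(V,W) \leq 4$, contradicting the distance assumption. Hence at most one such orbit can appear, so $\alpha_3 \leq 1$.

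I do not foresee a serious obstacle: the whole lemma is a short counting exercise once the distance is rewritten in terms of $\dim(V \cap W)$. The only subtle observation is that the case of equality $7 \cdot 9 = 63$ rigidifies the size-$9$ orbits into spreads, and it is exactly this rigidity that forces the uniqueness in part (3); no computer search is required here.
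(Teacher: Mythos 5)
Your proposal is correct, but it proves the lemma by a genuinely different route than the paper. The paper's proof is purely computational: following the algorithms of Section 2, all orbits of $G_2(6,3)$ are enumerated and their internal minimum distances computed, and the proof simply reports that no orbit of size 63 or 21 has minimum distance 6 while exactly one orbit of size 9 does. You replace this by a short theoretical argument: since $d(V,W)=6-2\dim(V\cap W)$ for $V,W\in G_2(6,3)$, distance $6$ forces pairwise trivial intersections, and the packing count $7N\le 63$ rules out orbits of size 21 or 63 inside $\codeC$ (giving (1) and (2)); the equality case $7\cdot 9=63$ forces any size-9 orbit of $\codeC$ to be a spread, and your disjointness argument shows two distinct spread orbits cannot coexist at distance 6, giving (3). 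All steps check out, including the exclusion of orbit size 1 and the fact that distinct orbits are disjoint. What your approach buys is a computer-free and more general proof: it actually shows that \emph{any} (not necessarily cyclic) code in $G_2(6,3)$ with minimum distance 6 has at most 9 codewords, and the argument extends verbatim to $G_q(2k,k)$ with $d=2k$; you also do not need the paper's stronger computational fact that the size-9 orbit with distance 6 is unique in $G_2(6,3)$. What the paper's approach buys is uniformity with the rest of its computational classification (the same orbit/clique machinery handles all parameter sets, including those where no clean counting argument is available) and, as a by-product, the explicit identification of the unique spread orbit used in the subsequent existence theorem.
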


\begin{proof}
There are not orbits with minimum distance 6 having 63 or 21 subspaces. There is a single orbit with minimum distance 6 and nine subspaces.  
\end{proof}

\begin{theorem}
 $\mathcal{C}_2(6,6,3) = 9$.
\end{theorem}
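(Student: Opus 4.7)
The plan is to combine the upper bound furnished by the preceding lemma with an explicit construction realizing that bound. By Theorem~\ref{theorem1}, any cyclic code $\codeC \subseteq G_2(6,3)$ with minimum distance $6$ satisfies
\[ |\codeC| = 63\alpha_1 + 21\alpha_2 + 9\alpha_3 \]
for nonnegative integers $\alpha_t$. Feeding the three inequalities from the lemma into this expression immediately yields $|\codeC| \leq 63\cdot 0 + 21\cdot 0 + 9\cdot 1 = 9$, so $\mathcal{C}_2(6,6,3) \leq 9$.

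For the matching lower bound, I would exhibit the unique orbit of size $9$ and minimum distance $6$ guaranteed by part (3) of the lemma. Concretely, the plan is to fix a primitive element $\alpha$ of $\F_{2^6}$, identify the subspace $V \in G_2(6,3)$ (for example, one of the form $V = \{0, \alpha^{i_1}, \alpha^{i_2}, \ldots\}$) whose stabilizer under multiplication is the subfield $\F_{2^3}^\ast$, so that $|\orb(V)| = (2^6-1)/(2^3-1) = 9$, and verify that $d_{\orb(V)} \geq 6$. Since this orbit is itself a cyclic Grassmannian code, it witnesses $\mathcal{C}_2(6,6,3) \geq 9$, and combining with the upper bound finishes the proof.

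The only real obstacle here is the construction/verification step: confirming that such an orbit of cardinality $9$ with minimum distance $6$ genuinely exists. This is precisely what the computational pipeline in Algorithm~1 and Algorithm~2 (applied with $n=6$, $k=3$, $q=2$, $t=3$) is designed to certify, and the preceding lemma records its output. Given that certification, the proof reduces to a one-line arithmetic bound plus the invocation of that orbit as the optimal code.
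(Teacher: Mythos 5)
Your proposal is correct and follows essentially the same route as the paper: the upper bound $9$ comes from plugging the preceding lemma's inequalities into Theorem~\ref{theorem1}, and the lower bound comes from the single orbit of $9$ subspaces with minimum distance $6$, which the paper exhibits explicitly as the cyclic shifts of $\{\alpha^{0},\alpha^{9},\alpha^{18},\alpha^{27},\alpha^{36},\alpha^{45},\alpha^{54}\}$ (i.e.\ the subfield $\F_{2^3}$, whose stabilizer $\F_{2^3}^\ast$ gives orbit size $(2^6-1)/(2^3-1)=9$, exactly as you describe). The only caveat is that part (3) of the lemma by itself asserts only $\alpha_3\leq 1$, so the existence of that orbit must be supplied by the explicit (computationally verified) construction rather than deduced from the lemma, which is precisely what the paper's proof does.
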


\begin{proof}
Let $\alpha$ be a primitive root of $x^6+...$ and use this polynomial to generate the field $\F_{2^{6}}$.  Let $\codeC \subseteq G_2(6,3)$   which consists of all cyclic shifts of
\[\{ \alpha^{0}, \alpha^{9}, \alpha^{18}, \alpha^{27}, \alpha^{36}, \alpha^{45}, \alpha^{54}\}.\]
This code  $\codeC$ is an $[6,9,6,3]$-cyclic code.  It consists of a unique orbit with nine subspaces.
\end{proof}

 \subsection{Calculating the number  $\mathcal{C}_2(6,4,3)$}
It follows from  Theorem \ref{theorem1}  that 
\[\mathcal{C}_2(6,4,3)=63\alpha_1+21\alpha_2+9\alpha_3.\]

\begin{lemma}
Let $\codeC \subseteq G_2(6,3)$ a cyclic code with minimum distance 4. Then  
\begin{enumerate}[(1)]
\item $\alpha_1 \leq 1$
\item $\alpha_2 \leq 0$
\item $\alpha_3 \leq 1$
\end{enumerate}
\end{lemma}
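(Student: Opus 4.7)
The plan is to invoke Algorithm 2 with parameters $(n,d,k,q)=(6,4,3,2)$ separately for each divisor $t\in\{1,2,3\}$ of $n=6$; the divisor $t=6$ is excluded since a $3$-dimensional subspace cannot be setwise fixed by all of $\F_{2^6}^\ast$ (its $\F_{2^6}$-span would be all of $\F_{2^6}$). For each $t$ the algorithm produces a graph $\mathcal{G}_t$ whose vertices are the orbits in $G_2(6,3)$ of size $\frac{2^6-1}{2^t-1}\in\{63,21,9\}$ with intra-orbit minimum subspace distance at least $4$, and whose edges join two orbits precisely when the minimum distance across their union remains at least $4$. The clique number of $\mathcal{G}_t$ is the desired bound on $\alpha_t$.

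Concretely, first I would enumerate all $3$-dimensional $\F_2$-subspaces of $\F_{2^6}$ and partition them into the orbits of the cyclic action $V\mapsto\alpha V$, tagging each orbit by its length. Next, for each orbit $O$ I would compute $d_O=\min\{d(U,W):U,W\in O,\ U\neq W\}$, keeping only those with $d_O\geq 4$. The surviving orbits become the vertex sets of the three graphs $\mathcal{G}_1,\mathcal{G}_2,\mathcal{G}_3$. Finally, for each $t$ I would build the edges by testing the pairwise union distance condition and invoke the clique-number routine.

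For part (2), I expect the filter $d_O\geq 4$ to eliminate every orbit of length $21$, so that $\mathcal{G}_2$ is empty and $\alpha_2\leq 0$; this mirrors the phenomenon already observed in the previous lemma for $d=6$, and reflects the fact that orbits stabilized by $\F_4^\ast$ tend to contain pairs of subspaces at small distance. For part (3), the orbit exhibited in the distance-$6$ theorem above already has $d_O=6\geq 4$, so $\mathcal{G}_3$ is nonempty, and the content of the claim is that no second orbit of length $9$ can be adjoined while keeping the cross-distance at least $4$, yielding $\alpha_3\leq 1$. For part (1), the claim is that the filtered graph $\mathcal{G}_1$ on the size-$63$ orbits has clique number $1$.

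The main obstacle is computational rather than conceptual: part (1) involves the largest population of orbits (each of length $63$), so constructing $\mathcal{G}_1$ requires filtering many orbits by intra-orbit distance and then testing all of their pairwise unions for the cross-distance condition. This is exactly the workload that the GAP/Java/Mathematica pipeline described after Algorithm 2 is designed to absorb, and the three stated bounds are then read off its output.
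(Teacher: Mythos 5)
Your proposal follows essentially the same route as the paper: run the clique-search pipeline (Algorithm 2) separately on the orbits of each admissible size, and read off $\alpha_1\leq 1$ and $\alpha_3\leq 1$ from the fact that the resulting graphs have clique number one (the size-63 graph is edge-less), and $\alpha_2\leq 0$ from the emptiness of the size-21 vertex set. One minor remark: the true reason part (2) holds is not that length-21 orbits have small intra-orbit distance but that no orbit in $G_2(6,3)$ has 21 subspaces at all (a subspace stabilized by $\F_4^\ast$ would be an $\F_4$-space, forcing even $\F_2$-dimension, contradicting $k=3$), which only strengthens your conclusion.
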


\begin{proof}
The constructed graph with these parameters is the null graph. That is an edge-less graph. Therefore the clique number is one. There are no orbits with 21 subspaces and minimum distance 4.
\end{proof}

\begin{lemma}
$\alpha_1+\alpha_3=1$
\end{lemma}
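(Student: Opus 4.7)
The plan is to show both inequalities $\alpha_1+\alpha_3\geq 1$ and $\alpha_1+\alpha_3\leq 1$. The lower bound is nearly immediate: by the preceding lemma there exists an orbit $O_3\subseteq G_2(6,3)$ of size $9$ with internal minimum distance at least $4$ (in fact $6$, as used in the previous subsection). Taking $\codeC=O_3$ is a cyclic code of minimum distance $\geq 4$, so $\alpha_3=1$, $\alpha_1=0$ is realizable, and hence $\alpha_1+\alpha_3\geq 1$.

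For the upper bound I would show that no orbit $O_1\subseteq G_2(6,3)$ of size $63$ with internal minimum distance $\geq 4$ is adjacent to $O_3$ in the orbit graph $\mathcal{G}$ of Section~2; consequently no clique of $\mathcal{G}$ simultaneously contains an orbit of size $63$ and one of size $9$, so $\alpha_1+\alpha_3\leq 1$. The key structural input is a description of $O_3$: because $|O_3|=9=(2^6-1)/(2^3-1)$, the stabilizer in $\F_{2^6}^*$ of each $W\in O_3$ has order $7$ and must therefore equal $\F_{2^3}^*$. Hence every $W\in O_3$ is a one-dimensional $\F_{2^3}$-subspace of $\F_{2^6}$, i.e., $W=\alpha^j\F_{2^3}$, and $O_3=\{\alpha^j\F_{2^3}\mid 0\leq j\leq 8\}$ enumerates the nine multiplicative cosets of $\F_{2^3}$.

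To finish I would exhibit a generator $V$ of any candidate $O_1$ using the primitive-polynomial setup of Section~3 and verify that at least three of its seven nonzero elements lie in a single $\F_{2^3}^*$-coset. This is equivalent to $\dim_{\F_2}(V\cap\alpha^j\F_{2^3})=2$ for some $j$, and forces $d(V,\alpha^j\F_{2^3})=6-2\cdot 2=2<4$, killing the potential edge between $O_1$ and $O_3$.

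The main obstacle is this final verification: a generic three-dimensional $\F_2$-subspace of $\F_{2^6}$ can distribute its seven nonzero vectors with at most one per $\F_{2^3}^*$-coset, in which case $d(V,W)\geq 4$ for every $W\in O_3$ and the argument collapses. It is therefore a specific arithmetic feature of the orbits of size $63$ with internal minimum distance $4$ that some $\F_{2^3}^*$-coset absorbs three of their vectors, and my plan is to confirm this directly via the GAP/Mathematica pipeline outlined in Section~2, consistent with the computational approach of the paper.
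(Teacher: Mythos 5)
Your plan is essentially the paper's proof: the paper also obtains $\alpha_1+\alpha_3\le 1$ from the computational fact that the combined orbit graph on the size-63 orbits (of internal distance $\ge 4$) and the unique size-9 orbit is edgeless, i.e.\ no 63-orbit is distance-compatible with the subfield orbit, and equality then comes from the existence of either orbit alone. Your identification of the size-9 orbit as the cosets $\alpha^{j}\F_{2^3}$ and the reduction of non-adjacency to three nonzero vectors of a generator sharing a coset is a correct refinement, but the decisive step you defer to the GAP/Mathematica check is exactly the computation the paper reports (its proof's reference to ``63 and 21'' is a typo for 63 and 9).
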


\begin{proof}
The combined constructed graph with the orbits with 63 and 21 subspaces in the null graph. Then the cyclic Grassmannian code has an orbit of 63 subspaces or an orbit of 9 subspaces but not both. 
\end{proof}

\begin{theorem}
 $\mathcal{C}_2(6,4,3) = 63$.
\end{theorem}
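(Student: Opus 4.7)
The plan is to combine the two preceding lemmas to get an upper bound of $63$, and then exhibit an explicit orbit of size $63$ with minimum distance $4$ to show the bound is tight.

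For the upper bound, I would start from the formula $\mathcal{C}_2(6,4,3)=63\alpha_1+21\alpha_2+9\alpha_3$ and substitute the constraints already established: $\alpha_2\le 0$ (so in fact $\alpha_2=0$), $\alpha_1,\alpha_3\in\{0,1\}$, and $\alpha_1+\alpha_3=1$. This forces $(\alpha_1,\alpha_3)$ to be either $(1,0)$ or $(0,1)$, yielding a code size of $63$ or $9$ respectively. Hence $\mathcal{C}_2(6,4,3)\le 63$, with equality only if some orbit of length $63$ (the $t=1$ case, since $\frac{2^6-1}{2^1-1}=63$) has minimum subspace distance at least $4$.

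For the matching lower bound, I would mimic the explicit construction used in the earlier theorems of the section: pick a primitive element $\alpha$ of $\F_{2^6}$ (via an irreducible polynomial of degree $6$), choose a carefully selected $3$-dimensional $\F_2$-subspace $V$ spanned by three powers of $\alpha$ (for example a set of the form $\{\alpha^{i_1},\alpha^{i_2},\alpha^{i_3}\}$ whose complete orbit under multiplication by powers of $\alpha$ has size exactly $63$), and verify that the full orbit $\orb(V)$ has minimum subspace distance $4$. The orbit having size $63$ (rather than a proper divisor such as $9$ or $21$) amounts to checking that the stabilizer of $V$ in $\F_{2^6}^\ast$ is trivial, i.e., $V$ is not a subfield or a subspace over a proper intermediate field.

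The main obstacle is the search step: one has to produce a concrete generator of a size-$63$ orbit whose minimum distance is at least $4$, and this is precisely what Algorithm $2$ from the previous section is designed to do (the upper bound from the lemma already promises that such an orbit exists, since otherwise the graph of length-$63$ orbits with pairwise distance $\ge 4$ would have clique number $0$, contradicting $\alpha_1\le 1$ being tight in the combined analysis). Once the generator is exhibited, the code $\codeC = \orb(V)$ is a $[6,63,4,3]$-cyclic code, yielding $\mathcal{C}_2(6,4,3)\ge 63$ and completing the proof.
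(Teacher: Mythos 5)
Your proposal takes essentially the same route as the paper: the upper bound is obtained from the preceding lemmas exactly as you describe, and the paper's proof of the theorem itself is precisely the explicit construction you sketch, exhibiting the generating set $\{\alpha^{0},\alpha^{6},\alpha^{15},\alpha^{26},\alpha^{33},\alpha^{34},\alpha^{38}\}$ in $\F_{2^6}$ whose orbit consists of $63$ subspaces with minimum distance $4$. The only differences are that the paper supplies this concrete computer-found generator while you leave the search step open, and that your aside that existence already follows from $\alpha_1\le 1$ ``being tight'' is circular as phrased --- existence really rests on the computation (the clique number of the graph of $63$-subspace orbits being one), i.e.\ on actually exhibiting such an orbit.
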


\begin{proof}
Let $\alpha$ be a primitive root of $x^6+...$ and use this polynomial to generate the field $\F_{2^6}$. Let $\codeC \subseteq G_2(6,3)$   which consists of all cyclic shifts of
\[\{ \alpha^{0}, \alpha^{6}, \alpha^{15}, \alpha^{26}, \alpha^{33}, \alpha^{34}, \alpha^{38}\}.\]
This code  $\codeC$ is an $[6,63,6,3]$-cyclic code.  It consists of a unique orbit with 63 subspaces.
\end{proof}

\subsection{Calculating the number  $\mathcal{C}_2(6,4,2)$}
It follows from  Theorem \ref{theorem1}  that 
\[\mathcal{C}_2(6,4,2)=63\alpha_1+21\alpha_2+9\alpha_3.\]

\begin{lemma}
Let $\codeC \subseteq G_2(6,2)$ a cyclic code with minimum distance 4. Then  
\begin{enumerate}[(1)]
\item $\alpha_1 \leq 0$
\item $\alpha_2 \leq 1$
\item $\alpha_3 \leq 0$
\end{enumerate}
\end{lemma}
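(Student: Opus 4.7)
The plan is to mirror the approach used in the previous lemmas of this section: apply Algorithm 2 separately for each $t\in\{1,2,3\}$ with $(n,d,k,q)=(6,4,2,2)$, enumerate the orbits of $G_2(6,2)$ of size $(2^6-1)/(2^t-1)$, discard those whose minimum subspace distance is less than $4$, and read off the clique number of the resulting compatibility graph as the bound on $\alpha_t$. Since $|G_2(6,2)|=651$, the enumeration is lightweight.

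I would begin with the easy cases $t=3$ and $t=2$ because each is pinned down by the stabilizer structure. For $t=3$, an orbit of size $9$ would require $V$ to be stabilized by $\F_8^\ast$, i.e., to be an $\F_8$-submodule of $\F_{2^6}$; but $|V|=4$ rules this out, so no such orbit exists and $\alpha_3\leq 0$ follows vacuously. For $t=2$, the $2$-dimensional $\F_2$-subspaces stabilized by $\F_4^\ast$ are exactly the $\F_4$-lines in $\F_{2^6}\cong\F_4^3$; there are $(4^3-1)/(4-1)=21$ of them and the $\F_{2^6}^\ast$-action on them is transitive, so they form a single orbit and $\alpha_2\leq 1$.

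For $t=1$, I would run the enumeration over the remaining $10$ orbits of size $63$ and expect every one to fail the $d\geq 4$ condition. A structural shortcut makes this painless: if $V$ has trivial stabilizer then picking distinct $a,b\in V\setminus\{0\}$ yields $\alpha:=ab^{-1}\neq 1$ with $\alpha V\neq V$ and $a=\alpha b\in V\cap\alpha V$, so $\dim(V\cap\alpha V)\geq 1$ and $d(V,\alpha V)\leq 2$. This kills every orbit of size $63$ at once and gives $\alpha_1\leq 0$.

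I expect the $t=1$ case to be the main obstacle at the purely algorithmic level, since it produces the largest graph and the most pairwise distance computations; the stabilizer-based shortcut above is what I would actually rely on to avoid the heavy enumeration.
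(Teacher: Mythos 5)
Your proposal is correct, and it reaches the three bounds by a genuinely different route than the paper. The paper's proof is a bare report of the computational search prescribed by Algorithm 2: no orbits of size 63 or 9 with minimum distance 4 were found, and the graph on the size-21 orbits is the null graph, so its clique number is 1. You begin with the same algorithmic framing but then replace the search by structural arguments: for $t=3$ you observe that an orbit of size 9 would force the stabilizer $\F_8^\ast$ to act on $V$, making $V$ an $\F_8$-space, which is impossible since $|V|=4$, so no size-9 orbits exist at all (a slightly stronger statement than the paper's); for $t=2$ you identify the subspaces with stabilizer $\F_4^\ast$ as the 21 lines of $\F_{2^6}\cong\F_4^3$ and note they form a single orbit, giving $\alpha_2\leq 1$ without any distance computation; and for $t=1$ your choice $\alpha=ab^{-1}$ with distinct nonzero $a,b\in V$ produces $\alpha V\neq V$ sharing the nonzero vector $a$ with $V$, so every size-63 orbit has internal minimum distance at most $2$ --- an instance of the general fact that an orbit with stabilizer $\F_{q^t}^\ast$ has minimum distance at most $2k-2t$, which for $k=2$ kills distance 4 outright. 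What each approach buys: the paper's method is uniform in the parameters and requires no case-specific insight, but it is not human-verifiable and, for this small case, overkill; your argument is computer-free, checkable by hand, and actually proves more (e.g. $\alpha_1=0$ for full-length orbits in any $G_q(n,2)$), though it exploits $k=2$ heavily and would not replace the clique computations in the larger cases later in the paper such as $G_2(8,4)$.
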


\begin{proof}
There are not orbits with minimum distance 4 having 63 or 9 subspaces. The associate graph with the orbits of 21 subspaces is the null graph. 
\end{proof}

\begin{theorem}
 $\mathcal{C}_2(6,4,2) = 21$.
\end{theorem}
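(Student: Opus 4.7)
The plan has two parts: establishing the upper bound, then exhibiting a matching construction.

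The upper bound is immediate from the previous lemma combined with Theorem~\ref{theorem1}: substituting $\alpha_1 \le 0$, $\alpha_2 \le 1$, $\alpha_3 \le 0$ into
\[
\mathcal{C}_2(6,4,2) = 63\alpha_1 + 21\alpha_2 + 9\alpha_3
\]
gives $\mathcal{C}_2(6,4,2) \le 21$. So the work is entirely in the construction of a single orbit of $21$ two-dimensional $\F_2$-subspaces of $\F_{2^6}$ with pairwise subspace distance at least $4$.

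For the construction, I would choose $V$ so that $|\orb(V)| = (2^6-1)/(2^t-1) = 21$, which forces $t=2$. By the orbit-size formula this happens exactly when the multiplicative stabilizer of $V$ in $\F_{2^6}^\ast$ has order $3$, i.e., when $V$ is invariant under multiplication by $\F_4^\ast$. The cleanest choice is to take $V$ to be the subfield $\F_4 \subset \F_{2^6}$ itself, regarded as a $2$-dimensional $\F_2$-subspace; equivalently, if $\alpha$ is a primitive root of a chosen primitive polynomial for $\F_{2^6}$, then $V = \{0,\alpha^{0},\alpha^{21},\alpha^{42}\}$. Following the format of the preceding theorems, the code $\codeC$ will be described as the set of all cyclic shifts of this generator.

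The main step of the verification is to show the orbit has size $21$ and minimum distance $4$. For the size: $\alpha^i V = V$ iff $\alpha^i \in V^\ast = \F_4^\ast$, so the $21$ cosets of $\F_4^\ast$ in $\F_{2^6}^\ast$ give $21$ distinct subspaces. For the distance, I would argue that if $\alpha^i V \cap \alpha^j V \neq \{0\}$, then there exist nonzero $v_1, v_2 \in V$ with $\alpha^{i-j} = v_2/v_1 \in \F_4^\ast$ (using that $V = \F_4$ is closed under division), forcing $\alpha^i V = \alpha^j V$. Hence distinct orbit members intersect trivially, giving $d(\alpha^i V, \alpha^j V) = 2\cdot 2 - 0 = 4$. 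Combined with the upper bound, this yields $\mathcal{C}_2(6,4,2) = 21$.

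No real obstacle is anticipated: the bound is handed to us by the previous lemma, and the construction reduces to recognizing that the $\F_4$-subfield structure provides the required orbit. The only mild subtlety is phrasing the generator in the primitive-root notation used elsewhere in the paper so the presentation is uniform with the surrounding theorems.
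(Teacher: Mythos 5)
Your proposal is correct and follows essentially the same route as the paper: the upper bound comes from the preceding lemma together with Theorem~\ref{theorem1}, and the matching construction is the single orbit of $\{0,\alpha^{0},\alpha^{21},\alpha^{42}\}$, i.e.\ the subfield $\F_4\subset\F_{2^6}$, exactly the generator the paper uses. The only difference is that you spell out the verification (orbit size $21$ via the $\F_4^\ast$-stabilizer and minimum distance $4$ via trivial pairwise intersections) that the paper simply asserts, which is a welcome addition but not a different method.
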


\begin{proof}
Let $\alpha$ be a primitive root of $x^6+...$ and use this polynomial to generate the field $\F_{2^{6}}$. Let $\codeC \subseteq G_2(6,2)$   which consists of all cyclic shifts of
\[\{ \alpha^{0}, \alpha^{21}, \alpha^{42}\}.\]
This code  $\codeC$ is an $[6,21,4,2]$-cyclic code.  It consists of a unique orbit with 21 subspaces.
\end{proof}

\begin{remark}
Results table for $n=6$ and $q=2$.
 \begin{center}
 \begin{table}[h]
$\begin{array}{|c|c|c|}
\hline
d\setminus k & 2 & 3 \\
\hline
4 & 21 & 63\\
\hline
6 & 0 & 9\\
\hline
\end{array}$\vskip0.5cm
 \caption{Tabla para $\mathcal{C}_2(6,d,k)$}
  \end{table}
 \end{center}
\end{remark}

\section{Classification of binary Grassmannian codes of length 7}

 \subsection{Calculating the number  $\mathcal{C}_2(7,4,3)$}

It follows from  Theorem \ref{theorem1}  that 
\[\mathcal{C}_2(7,4,3)=127\alpha_1.\]

\begin{figure}[h] 
\centering
\includegraphics[scale=0.22]{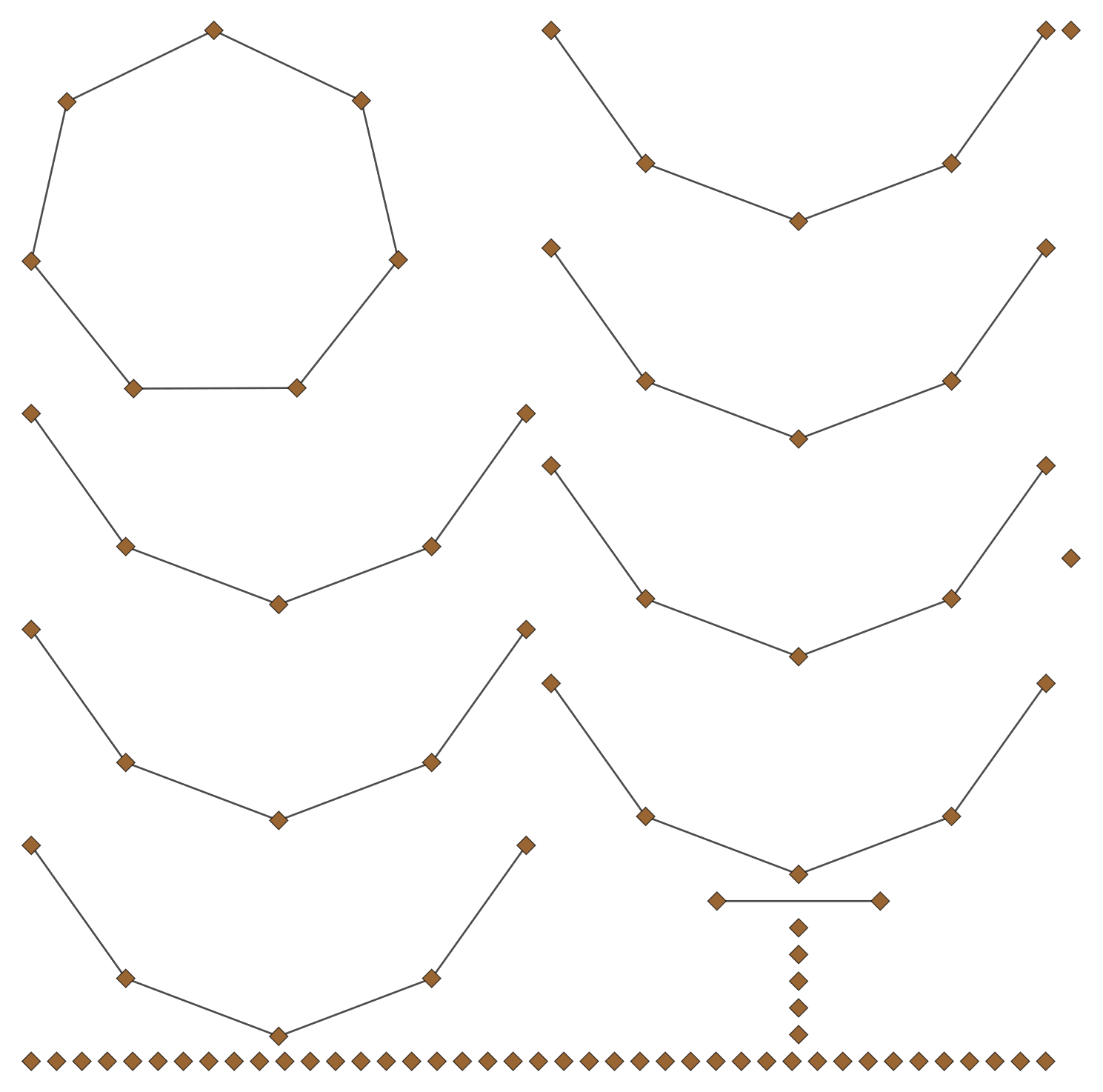} 
\caption{Cliques}\label{figura 1}
\end{figure}

\begin{lemma}
For a cyclic code  $\codeC \subseteq G_2 (7,3)$ with minimum distance 4 holds that  $\alpha_1 \leq 2$. 
\end{lemma}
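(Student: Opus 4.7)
The plan is to invoke Algorithm 2 in the special situation $n=7$, $k=3$, $q=2$, $d=4$, $t=1$, and read off the clique number of the resulting orbit–compatibility graph.

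First I would observe a structural simplification specific to $n=7$: since $7$ is prime, the only divisors are $t=1$ and $t=7$, so by the orbit-size formula $|\orb(V)|\in\{1,127\}$. A fixed subspace under the cyclic action corresponds to a subfield of $\F_{2^7}$, and $\F_{2^7}$ has no intermediate subfields; in particular no $3$-dimensional one. Hence every orbit in $G_2(7,3)$ has length $127$, consistent with the displayed identity $\mathcal{C}_2(7,4,3)=127\alpha_1$. Thus bounding $\alpha_1$ is the whole game, and it amounts to bounding the number of pairwise compatible length-$127$ orbits.

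Next I would run the Algorithm 2 procedure. Step (a): enumerate all orbits $\orb(V)\subseteq G_2(7,3)$ and, for each one, compute the minimum subspace distance $d_{\orb(V)}$ among its elements; retain only those orbits with $d_{\orb(V)}\ge 4$. Step (b): for each pair of retained orbits $(O_1,O_2)$, compute the cross-distance $D(O_1,O_2):=\min\{d(X,Y): X\in O_1, Y\in O_2\}$, and place an edge in the orbit graph $\mathcal{G}$ precisely when $D(O_1,O_2)\ge 4$. Step (c): compute $\omega(\mathcal{G})$. By construction any clique in $\mathcal{G}$ yields a cyclic code of dimension $3$ and minimum distance $\ge 4$ whose $\alpha_1$ equals the clique size, and conversely. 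Therefore $\alpha_1\le \omega(\mathcal{G})$.

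The main obstacle is computational, not conceptual: there are $|G_2(7,3)|=\binom{7}{3}_2=11811$ subspaces, partitioned into orbits of size $127$, so one must handle roughly $93$ orbits and examine $\binom{93}{2}$ pairs, each requiring up to $127\cdot 127$ subspace-distance evaluations. This is exactly what the software pipeline (GAP, Java, Mathematica) described in the remark preceding Algorithm 1 is designed for; the distilled output is the graph displayed in Figure~\ref{figura 1}.

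Finally I would inspect $\mathcal{G}$: it has edges but contains no triangle, so every maximal clique has size at most $2$, and at least one edge exists (exhibiting a $2$-clique). Consequently $\omega(\mathcal{G})=2$, which gives the stated bound $\alpha_1\le 2$. A matching lower bound — and hence the exact value of $\mathcal{C}_2(7,4,3)$ — would then be obtained in a subsequent theorem by exhibiting two explicit orbits forming such an edge.
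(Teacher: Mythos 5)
Your proposal is correct and follows essentially the same route as the paper: both arguments reduce the bound on $\alpha_1$ to computing the orbit--compatibility graph for $G_2(7,3)$ with $d=4$ (Algorithm 2 with $t=1$) and reading off from the computed graph in Figure~\ref{figura 1} that its largest clique has two vertices. Your added observations (that primality of $7$ forces every orbit to have $127$ subspaces, and the count of $93$ orbits) are correct but only make explicit what the paper already encodes in the identity $\mathcal{C}_2(7,4,3)=127\alpha_1$.
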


\begin{proof}
The figure above illustrates this result. We can see that there are various cliques and the big one has two vertices, that is two orbits of 127 subspaces.
\end{proof}

\begin{theorem}
 $\mathcal{C}_2(7,4,3) = 254$.
\end{theorem}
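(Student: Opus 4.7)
The plan is to establish $\mathcal{C}_2(7,4,3)=254$ by matching an upper bound with an explicit construction, exactly in the spirit of the preceding theorems in this section.

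For the upper bound, I would invoke the formula of Theorem \ref{theorem1} together with the fact that $n=7$ is prime, so the only divisors of $n$ are $1$ and $7$. The orbit-length value $t=7$ would produce orbits of size $(2^7-1)/(2^7-1)=1$, which corresponds to $\F_{2^7}$-subspaces of $\F_{2^7}$; since there is no $3$-dimensional such subspace, that coefficient contributes nothing. Hence $\mathcal{C}_2(7,4,3)=127\alpha_1$. The previous lemma already gives $\alpha_1\leq 2$, so $\mathcal{C}_2(7,4,3)\leq 254$.

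For the matching lower bound, I would exhibit a concrete cyclic code of size $254$. Following the construction pattern used earlier in the paper, I would fix a primitive polynomial of degree $7$ over $\F_2$ (for instance $x^7+x+1$), let $\alpha$ be a primitive root, and then present \emph{two} representative $3$-dimensional subspaces $V_1,V_2\in G_2(7,3)$ whose orbits $\orb(V_1),\orb(V_2)$ both have $127$ elements, each have minimum subspace distance at least $4$, and whose mutual distance $D(\orb(V_1),\orb(V_2))\geq 4$. These two subspaces correspond precisely to the two vertices forming the largest clique in the graph shown in Figure \ref{figura 1}. The code $\codeC=\orb(V_1)\cup\orb(V_2)$ is then a $[7,254,4,3]$-cyclic code, finishing the argument.

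The main obstacle is purely computational: locating the specific pair of representatives $V_1,V_2$ whose orbits form the $2$-clique. This is not something one would derive by hand but rather read off from Algorithm 2 of Section 2, applied with $(n,d,k,q,t)=(7,4,3,2,1)$. Once the two exponent sets defining $V_1$ and $V_2$ (analogous to $\{\alpha^{0},\alpha^{6},\alpha^{15},\ldots\}$ in the $n=6$ case) are produced by the search, verifying that the pairwise subspace distances are all $\geq 4$ is a finite check that the algorithm has already performed. Nothing in the argument is conceptually deep beyond noting that the upper bound from the clique number of the orbit graph is tight, exhibited by the construction.
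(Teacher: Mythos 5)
Your proposal follows essentially the same route as the paper: the upper bound $\mathcal{C}_2(7,4,3)=127\alpha_1\leq 254$ comes from Theorem \ref{theorem1} together with the preceding lemma (clique number $2$ in the orbit graph), and the lower bound from a cyclic code consisting of two $127$-subspace orbits found by the computer search. The only difference is that the paper actually records the two exponent sets $\{\alpha^{0},\alpha^{4},\alpha^{9},\alpha^{28},\alpha^{38},\alpha^{58},\alpha^{90}\}$ and $\{\alpha^{0},\alpha^{8},\alpha^{23},\alpha^{39},\alpha^{56},\alpha^{82},\alpha^{100}\}$, whereas you leave them as output of Algorithm 2; supplying them (or an equivalent verified pair) is all that remains to complete your argument.
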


\begin{proof}
Let $\alpha$ be a primitive root of $x^7+...$ and use this polynomial to generate the field $\F_{2^{7}}$.  Let $\codeC \subseteq G_2(7,3)$   which consists of all cyclic shifts of
\begin{eqnarray*}
\{ \alpha^{0}, \alpha^{4}, \alpha^{9}, \alpha^{28}, \alpha^{38}, \alpha^{58}, \alpha^{90}\}\\
\{ \alpha^{0}, \alpha^{8}, \alpha^{23}, \alpha^{39}, \alpha^{56}, \alpha^{82}, \alpha^{100}\}.
\end{eqnarray*}
This code  $\codeC$ is an $[8, 254,4,3]$-cyclic code.  It consists of two orbits with 127 subspaces.
\end{proof}

\begin{remark}
Results table for $n=7$ and $q=2$.
 \begin{center}
 \begin{table}[h]
$\begin{array}{|c|c|c|}
\hline
d\setminus k & 2 & 3 \\
\hline
4 & 0 & 254\\
\hline
6 & 0 & 0\\
\hline
\end{array}$\vskip0.5cm
 \caption{ $\mathcal{C}_2(7,d,k)$}
  \end{table}
 \end{center}
\end{remark}

\section{Classification of binary Grassmannian codes of length 8}

\subsection{Calculating the number $\mathcal{C}_2(8,4,4)$}

It follows from  Theorem \ref{theorem1}  that 
\[\mathcal{C}_2(8,4,4)=255\alpha_1+85\alpha_2+17\alpha_4.\]

\begin{lemma}\label{6.1}
Let $\codeC \subseteq G_2(8,4)$ a cyclic code with minimum distance 4. Then 
\begin{enumerate}[(1)]
\item $\alpha_1 \leq 17$;
\item $\alpha_2 \leq 4$;
\item $\alpha_4 \leq 1$.
\end{enumerate}
\end{lemma}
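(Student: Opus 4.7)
The plan is to instantiate Algorithm~2 for the parameters $(n,d,k,q)=(8,4,4,2)$ once for each divisor $t$ of $n$ that can occur as an orbit-type exponent for a $4$-dimensional subspace; these are $t\in\{1,2,4\}$. The divisor $t=8$ is excluded because it would force $V\in\{\{0\},\F_2^8\}$, neither of which has dimension $4$, which is exactly why only three coefficients appear in the decomposition $\mathcal{C}_2(8,4,4)=255\alpha_1+85\alpha_2+17\alpha_4$.

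First I would enumerate the $\binom{8}{4}_2 = 200787$ subspaces of $G_2(8,4)$ in GAP and partition them into orbits under the cyclic shift $V\mapsto \alpha V$. By the formula $|\orb(V)|=(2^n-1)/(2^t-1)$ recalled before Theorem~\ref{theorem1}, each orbit has size $255$, $85$, or $17$. For every orbit $O$ I compute the intra-orbit minimum distance $d_O=\min\{d(V,\alpha^i V): 1\le i<|O|\}$ and discard the orbits with $d_O<4$, since those can never belong to a code of minimum distance $4$. For $t=4$ there is only one orbit, namely the set of $1$-dimensional $\F_{16}$-subspaces of $\F_{2^8}$ viewed over $\F_2$; this already establishes $\alpha_4\le 1$.

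For the remaining two orbit sizes $s\in\{255,85\}$, I would build the graph $\mathcal{G}_s$ (step~(6) of the general recipe, i.e.\ Algorithm~2) on the surviving orbits of size $s$, placing an edge between $O_1$ and $O_2$ whenever the minimum pairwise distance $D(O_1,O_2)$ is at least $4$. The upper bound on $\alpha_t$ is then the clique number $\omega(\mathcal{G}_s)$, which I would compute in Wolfram Mathematica. The values $\omega(\mathcal{G}_{255})=17$ and $\omega(\mathcal{G}_{85})=4$ recorded in the lemma are precisely the output of this computation.

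The main obstacle is computational rather than conceptual. There are several hundred surviving orbits of size $255$, so the quadratic edge-construction loop requires on the order of $10^5$ orbit comparisons, each nominally involving $255^2$ pairs of $4$-dimensional subspaces. Exploiting the $\alpha$-equivariance $d(\alpha^i V_1,\alpha^i V_2)=d(V_1,V_2)$ lets us fix a single representative of $O_1$ and iterate only over $O_2$, and storing subspaces by reduced row-echelon identifiers keeps the distance check cheap. The delicate final step is the clique enumeration on $\mathcal{G}_{255}$, which is NP-hard in general; the intra-orbit filter together with Mathematica's \texttt{FindClique} should nevertheless return the exact clique number for a graph of this moderate size.
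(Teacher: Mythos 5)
Your proposal follows essentially the same route as the paper: instantiate the computational recipe (Algorithm~2) for $(n,d,k,q)=(8,4,4,2)$, partition $G_2(8,4)$ into cyclic-shift orbits of sizes $255$, $85$, $17$, filter by intra-orbit minimum distance, and bound each $\alpha_t$ by the clique number of the corresponding orbit graph, obtaining $17$, $4$, $1$. If anything, your explicit count that there is a single orbit of size $17$ and only four orbits of size $85$ makes the bounds $\alpha_4\le 1$ and $\alpha_2\le 4$ more transparent than the paper's terse justification.
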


\begin{proof}
The graph formed only by the orbits of 255 subspaces has a clique of 17 vertices, and there is no clique of greater size. In the same way, the graph formed by the orbits of 85 subspaces and the orbits of 17 subspaces is the null graph.
\end{proof}

\begin{lemma}\label{6.2}
If $\alpha_4=1$ then $\alpha_1+\alpha_2\leq 3$
\end{lemma}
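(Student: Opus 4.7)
The plan is to attack this by the same clique-in-orbit-graph machinery developed in Section~2, but applied to a restricted subgraph that forces the orbit of size $17$ into the code. Since Lemma~\ref{6.1} guarantees there is (up to isomorphism) a unique orbit $O_{4}\in G_{2}(8,4)$ of size $17=\tfrac{2^{8}-1}{2^{4}-1}$ whose internal minimum distance is at least $4$, assuming $\alpha_{4}=1$ means the code $\codeC$ must contain precisely $O_{4}$. The task then reduces to bounding the remaining orbits of sizes $255$ and $85$ that can be glued to $O_{4}$.

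I would first single out, among all orbits of size $255$ (respectively $85$) with internal minimum distance $\ge 4$, the subfamily consisting of those orbits $O$ satisfying $D(O,O_{4})\ge 4$; call these the $O_{4}$-admissible orbits. Using the Algorithm~2 routine (with $t=1$ and $t=2$), I would build the induced subgraph $\mathcal{G}_{O_{4}}$ of $\mathcal{G}$ whose vertex set is the union of the $O_{4}$-admissible $255$-orbits and the $O_{4}$-admissible $85$-orbits, and whose edges are inherited from $\mathcal{G}$. By the definition of the orbit graph, every clique in $\mathcal{G}_{O_{4}}$ together with $\{O_{4}\}$ is a valid cyclic code in $G_{2}(8,4)$ with minimum distance $\ge 4$, and conversely any cyclic code attaining $\alpha_{4}=1$ arises this way. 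Hence the inequality $\alpha_{1}+\alpha_{2}\le 3$ is equivalent to $\omega(\mathcal{G}_{O_{4}})\le 3$.

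The main step is then the computational clique-number computation: feed $\mathcal{G}_{O_{4}}$ to Wolfram Mathematica and verify that its largest clique has exactly $3$ vertices. This is the same procedure used in the preceding lemmas, only restricted to the neighborhood of $O_{4}$; in particular, the bound $\alpha_{1}+\alpha_{2}\le 3$ is tighter than the individual bounds $\alpha_{1}\le 17$, $\alpha_{2}\le 4$ because compatibility with $O_{4}$ eliminates most vertices and almost all edges of $\mathcal{G}$.

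The principal obstacle is a bookkeeping one rather than a conceptual one: generating all $O_{4}$-admissible orbits of the two sizes and honestly computing pairwise subspace distances between a $17$-orbit and each candidate $255$- or $85$-orbit is expensive, since each edge check involves comparing representatives of all cyclic shifts. I would mitigate this by first exploiting the $\F_{2^{4}}$-stabilizer of $O_{4}$ to cut the admissibility test down to one representative per $O_{4}$-coset of the cyclic group, and only afterwards run the clique computation on the much smaller surviving graph.
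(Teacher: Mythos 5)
Your proposal matches the paper's argument: the paper likewise fixes the (unique) $17$-subspace orbit and checks computationally that the combined graph of compatible $255$- and $85$-subspace orbits admits no clique larger than allowed, which is exactly your restriction to the $O_{4}$-admissible subgraph and a clique-number computation. The only difference is implementation detail (your stabilizer-based speedup), not the mathematical route.
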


\begin{proof}
Fixing the orbit of 17 subspaces in all cliques, then the combined graph formed by the orbits of 255 subspaces and 85 subspaces do not contain a clique with more than four vertices. 
\end{proof}

\begin{theorem}
	$\mathcal{C}_2(8,4,4) \leq 4675 $.
\end{theorem}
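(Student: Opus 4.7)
The plan is to treat $\mathcal{C}_2(8,4,4) = 255\alpha_1 + 85\alpha_2 + 17\alpha_4$ as a small integer linear program with the constraints from Lemmas \ref{6.1} and \ref{6.2}, and simply maximize the objective by case analysis on $\alpha_4$. Since Lemma \ref{6.1}(3) forces $\alpha_4 \in \{0,1\}$, only two cases arise.

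First I would handle the case $\alpha_4 = 0$. Here the constraints from Lemma \ref{6.1} decouple: $\alpha_1 \leq 17$ and $\alpha_2 \leq 4$ with no further interaction, and the objective is maximized coordinatewise by taking equality. This yields $255\cdot 17 + 85\cdot 4 + 17\cdot 0 = 4335 + 340 = 4675$.

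Next I would handle the case $\alpha_4 = 1$. Now Lemma \ref{6.2} applies and gives the sharper constraint $\alpha_1 + \alpha_2 \leq 3$. Because $255 > 85$, the maximum of $255\alpha_1 + 85\alpha_2$ on the simplex $\alpha_1 + \alpha_2 \leq 3$ with $\alpha_1,\alpha_2\geq 0$ is attained at $(\alpha_1,\alpha_2) = (3,0)$, giving $255\cdot 3 + 17 = 782$. Comparing the two cases, $4675 > 782$, so $\mathcal{C}_2(8,4,4) \leq 4675$.

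There is no real obstacle here: the theorem is essentially a corollary of the preceding two lemmas, obtained by linear optimization over a domain with four integer points to check (or two, after the case split). The substantive work lies entirely in Lemmas \ref{6.1} and \ref{6.2}, whose proofs rely on the graph-theoretic clique computations described in Section 2; the present statement is the clean packaging of those bounds into a single numerical estimate for $\mathcal{C}_2(8,4,4)$.
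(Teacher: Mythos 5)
Your argument is correct and is exactly what the paper intends: its proof of this theorem is the one-line remark that it "follows directly from the two previous lemmas," and your case split on $\alpha_4$ (getting $255\cdot 17+85\cdot 4=4675$ when $\alpha_4=0$ and at most $782$ when $\alpha_4=1$ via Lemma \ref{6.2}) is just the explicit form of that computation. No difference in approach, only in the level of detail spelled out.
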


\begin{proof}
It follows directly from the two previous lemmas. 
\end{proof}

\begin{lemma}\label{6.4}
If $\alpha_2=4$ then $\alpha_1\leq 16$.
\end{lemma}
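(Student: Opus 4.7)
The plan is to mimic the strategy used in Lemma \ref{6.2}, but now conditioning on the presence of a full quadruple of size-$85$ orbits rather than a single size-$17$ orbit. Concretely, $\alpha_2=4$ means the cyclic code $\codeC$ contains four orbits of length $85$ that pairwise have mutual distance at least $4$, i.e.\ four vertices forming a $4$-clique in the subgraph $\mathcal{G}_{85}$ of $\mathcal{G}$ spanned by the size-$85$ orbits. By Lemma \ref{6.1}(2) these are precisely the maximum cliques of $\mathcal{G}_{85}$, so there are only finitely many (and, by the computational inventory underlying Lemma \ref{6.1}, a tractable number) of such quadruples.

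First I would enumerate all $4$-cliques $\{O_{2,1},O_{2,2},O_{2,3},O_{2,4}\}$ in $\mathcal{G}_{85}$ using the same GAP/Java pipeline described in the Remark after the algorithms. For each such quadruple, I would form the induced subgraph $\mathcal{G}_{255}'$ of $\mathcal{G}$ on those size-$255$ orbits $O_1$ which satisfy $d_{O_1}\ge 4$ and $D(O_1,O_{2,j})\ge 4$ for every $j=1,\dots,4$. Any contribution to $\alpha_1$ in a code with $\alpha_2=4$ must come from a clique inside $\mathcal{G}_{255}'$ for some choice of quadruple, so the desired bound reduces to showing
\[
\max_{\text{$4$-cliques } Q\subseteq\mathcal{G}_{85}}\ \omega\bigl(\mathcal{G}_{255}'(Q)\bigr)\ \le\ 16.
\]

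Then I would run the clique-number routine (Wolfram Mathematica, as in the Remark) on each $\mathcal{G}_{255}'$ and verify the bound $16$ case by case. Note that the bound $\alpha_1\le 17$ from Lemma \ref{6.1}(1) comes from a $17$-clique in the \emph{unrestricted} graph of size-$255$ orbits; the content of the present lemma is that no such $17$-clique is jointly compatible with four size-$85$ orbits that themselves form a clique, so imposing the $\alpha_2=4$ constraint strictly removes at least one vertex from every maximum clique.

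The main obstacle will be purely computational rather than structural: the number of $4$-cliques in $\mathcal{G}_{85}$ together with the size of each restricted graph $\mathcal{G}_{255}'$ determines how much enumeration is needed, and exhibiting a concrete witness requires checking that for every one of these restricted graphs the clique number drops from $17$ to at most $16$. If any restricted graph still admitted a $17$-clique, the lemma would fail; so the proof is essentially a certification that the drop happens uniformly across all choices of $Q$.
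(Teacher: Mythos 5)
Your proposal matches the paper's proof: the paper likewise fixes each 4-clique of size-85 orbits in turn, restricts to the compatible size-255 orbits, and computationally verifies that the resulting clique number never exceeds 16. Your write-up just spells out the reduction more explicitly than the paper's brief sketch.
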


\begin{proof}
Similar to the previous theorem, but now we fix the clique of four orbits with 85 subspaces. This procedure is made for every combination of four orbits of 85 subspaces that form a clique.
\end{proof}

\begin{theorem}
There is a cyclic code with 4420 codewords. That is, $\alpha_2=4 $ and $\alpha_1=16$.  
\end{theorem}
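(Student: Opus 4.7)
The plan is to produce an explicit list of $16$ orbits of length $255$ together with $4$ orbits of length $85$ in $G_2(8,4)$ whose union forms a clique in the orbit graph $\mathcal{G}$ of Algorithm~1 with minimum distance threshold $d=4$; by step~(5) of the construction, such a clique is exactly a cyclic Grassmannian $(8,M,4,4)_2$ code, and the codeword count is $16\cdot 255 + 4\cdot 85 = 4420$, matching the claim. Lemma~\ref{6.4} already guarantees that this combination is the best one can hope for in the regime $\alpha_2 = 4$, so the content of this theorem is purely the achievability (existence) statement.

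The steps I would carry out, in order, are: first fix a primitive polynomial of degree $8$ over $\F_2$ and a primitive element $\alpha\in\F_{2^8}^\ast$, and enumerate via GAP all orbits of $G_2(8,4)$ under the cyclic shift action, tagging each with its size ($255$, $85$, or $17$) and its internal minimum subspace distance. Next, restricting to those orbits of size $85$ with internal minimum distance $\geq 4$, build the induced subgraph of $\mathcal{G}$ and use Mathematica's clique routine (as in the remark after step~(6)) to enumerate all $4$-cliques; this is the stage fixed in the proof of Lemma~\ref{6.4}. For each such $4$-clique $\{O_1,O_2,O_3,O_4\}$, form the subgraph of $\mathcal{G}$ whose vertices are the size-$255$ orbits $O$ with internal minimum distance $\geq 4$ satisfying $D(O,O_i)\geq 4$ for all $i=1,\dots,4$, and search for a $16$-clique in this restricted graph.

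Once a concrete $4$-clique of size-$85$ orbits and a compatible $16$-clique of size-$255$ orbits are found, I would display the union as the code $\codeC$, listing generating subspaces (one exponent tuple per orbit) exactly in the style of the preceding theorems in the section, and verify independently that all pairwise distances between the $16+4$ orbit generators achieve at least $4$. The main obstacle is the combinatorial blow-up: the number of size-$255$ orbits satisfying the internal distance condition can already be large, and for each of the candidate $4$-cliques on the size-$85$ side one must solve a fresh $16$-clique problem on a constrained subgraph, so the challenge is algorithmic rather than conceptual --- we need the clique search to actually terminate with a witness, not just an upper bound. If a witness is found for at least one $4$-clique of size-$85$ orbits, the theorem is established; combined with Lemma~\ref{6.4} this would also show the bound $4420$ is tight when $\alpha_2=4$.
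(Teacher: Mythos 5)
Your strategy is exactly the one the paper follows: fix the primitive polynomial $x^8+x^4+x^3+x^2+1$, generate $\F_{2^8}$, run the orbit/clique machinery of Algorithm~1 on $G_2(8,4)$ with $d=4$, and exhibit a clique consisting of sixteen orbits of size $255$ together with four orbits of size $85$, giving $16\cdot 255+4\cdot 85=4420$ codewords. The paper's proof, however, \emph{is} the witness: it lists twenty explicit generating sets of exponents (sixteen generating orbits with $255$ subspaces, four with $85$), and the theorem amounts to the verification of that list.

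As written, your argument stops one step short of a proof. The statement is an existence claim, and your conclusion is explicitly conditional (``if a witness is found\dots the theorem is established''); describing a clique search that might terminate successfully does not establish existence. To close the gap you must either include the concrete generators produced by the computation and verify that every orbit has internal minimum distance at least $4$ and that every pair of chosen orbits is at distance at least $4$, or at minimum report the actual output of the search in the style of the surrounding theorems. A smaller point: for this achievability statement you do not need to enumerate \emph{all} $4$-cliques of size-$85$ orbits and solve a fresh $16$-clique problem for each one --- a single successful combination suffices; that exhaustive sweep is only required for the optimality arguments, namely Lemma~\ref{6.4} and the ensuing determination of $\mathcal{C}_2(8,4,4)$.
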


\begin{proof}
Let $\alpha$ be a primitive root of $x^8+x^4+x^3+x^2+1$ and use this polynomial to generate the field $\F_{2^{8}}$. Let $\codeC \subseteq G_2(8,4)$ which consists of all cyclic shifts of

\begin{align*}
	\{ \alpha^{0}, \alpha^{7}, \alpha^{30}, \alpha^{46}, \alpha^{66}, \alpha^{76}, \alpha^{87}, \alpha^{88}, \alpha^{89}, \alpha^{112}, \alpha^{113}, \alpha^{137}, \alpha^{167}, \alpha^{175}, \alpha^{203}\}\\
	\{ \alpha^{0}, \alpha^{40}, \alpha^{41}, \alpha^{53}, \alpha^{65}, \alpha^{80}, \alpha^{84}, \alpha^{98}, \alpha^{124}, \alpha^{139}, \alpha^{147}, \alpha^{157}, \alpha^{162}, \alpha^{168}, \alpha^{180}\}\\
	\{ \alpha^{0}, \alpha^{2}, \alpha^{31}, \alpha^{45}, \alpha^{50}, \alpha^{91}, \alpha^{110}, \alpha^{123}, \alpha^{126}, \alpha^{163}, \alpha^{182}, \alpha^{183}, \alpha^{205}, \alpha^{207}, \alpha^{209}\}\\
	\{ \alpha^{0}, \alpha^{27}, \alpha^{59}, \alpha^{62}, \alpha^{82}, \alpha^{89}, \alpha^{90}, \alpha^{104}, \alpha^{114}, \alpha^{117}, \alpha^{122}, \alpha^{125}, \alpha^{166}, \alpha^{194}, \alpha^{203}\}\\
	\{ \alpha^{0}, \alpha^{1}, \alpha^{25}, \alpha^{56}, \alpha^{64}, \alpha^{65}, \alpha^{70}, \alpha^{71}, \alpha^{89}, \alpha^{95}, \alpha^{109}, \alpha^{131}, \alpha^{162}, \alpha^{176}, \alpha^{203}\}\\
	\{ \alpha^{0}, \alpha^{1}, \alpha^{25}, \alpha^{38}, \alpha^{81}, \alpha^{94}, \alpha^{124}, \alpha^{155}, \alpha^{156}, \alpha^{159}, \alpha^{160}, \alpha^{169}, \alpha^{180}, \alpha^{184}, \alpha^{202}\}\\
	\{ \alpha^{0}, \alpha^{7}, \alpha^{9}, \alpha^{57}, \alpha^{62}, \alpha^{64}, \alpha^{70}, \alpha^{72}, \alpha^{83}, \alpha^{90}, \alpha^{112}, \alpha^{120}, \alpha^{156}, \alpha^{169}, \alpha^{195}\}\\
	\{ \alpha^{0}, \alpha^{8}, \alpha^{16}, \alpha^{54}, \alpha^{69}, \alpha^{87}, \alpha^{125}, \alpha^{130}, \alpha^{145}, \alpha^{163}, \alpha^{167}, \alpha^{182}, \alpha^{194}, \alpha^{200}, \alpha^{208}\}\\
	\{ \alpha^{0}, \alpha^{5}, \alpha^{10}, \alpha^{21}, \alpha^{37}, \alpha^{40}, \alpha^{76}, \alpha^{84}, \alpha^{113}, \alpha^{114}, \alpha^{138}, \alpha^{143}, \alpha^{150}, \alpha^{166}, \alpha^{179}\}\\
	\{ \alpha^{0}, \alpha^{23}, \alpha^{64}, \alpha^{70}, \alpha^{79}, \alpha^{97}, \alpha^{110}, \alpha^{124}, \alpha^{126}, \alpha^{154}, \alpha^{174}, \alpha^{180}, \alpha^{190}, \alpha^{196}, \alpha^{201}\}\\
	\{ \alpha^{0}, \alpha^{16}, \alpha^{31}, \alpha^{45}, \alpha^{49}, \alpha^{88}, \alpha^{114}, \alpha^{145}, \alpha^{155}, \alpha^{159}, \alpha^{166}, \alpha^{171}, \alpha^{175}, \alpha^{197}, \alpha^{211}\}\\
	\{ \alpha^{0}, \alpha^{19}, \alpha^{47}, \alpha^{62}, \alpha^{78}, \alpha^{80}, \alpha^{90}, \alpha^{92}, \alpha^{101}, \alpha^{128}, \alpha^{140}, \alpha^{168}, \alpha^{205}, \alpha^{207}, \alpha^{212}\}\\
	\{ \alpha^{0}, \alpha^{2}, \alpha^{29}, \alpha^{39}, \alpha^{49}, \alpha^{50}, \alpha^{60}, \alpha^{71}, \alpha^{74}, \alpha^{103}, \alpha^{106}, \alpha^{109}, \alpha^{132}, \alpha^{181}, \alpha^{197}\}\\
	\{ \alpha^{0}, \alpha^{9}, \alpha^{28}, \alpha^{38}, \alpha^{47}, \alpha^{49}, \alpha^{93}, \alpha^{97}, \alpha^{101}, \alpha^{120}, \alpha^{158}, \alpha^{184}, \alpha^{190}, \alpha^{193}, \alpha^{197}\}\\
	\{ \alpha^{0}, \alpha^{7}, \alpha^{47}, \alpha^{59}, \alpha^{79}, \alpha^{82}, \alpha^{91}, \alpha^{94}, \alpha^{101}, \alpha^{112}, \alpha^{148}, \alpha^{174}, \alpha^{202}, \alpha^{206}, \alpha^{209}\}\\
	\{ \alpha^{0}, \alpha^{6}, \alpha^{12}, \alpha^{49}, \alpha^{53}, \alpha^{58}, \alpha^{107}, \alpha^{127}, \alpha^{147}, \alpha^{149}, \alpha^{156}, \alpha^{169}, \alpha^{188}, \alpha^{191}, \alpha^{197}\}\\
	\{ \alpha^{0}, \alpha^{7}, \alpha^{19}, \alpha^{27}, \alpha^{49}, \alpha^{85}, \alpha^{92}, \alpha^{104}, \alpha^{112}, \alpha^{134}, \alpha^{170}, \alpha^{177}, \alpha^{189}, \alpha^{197}, \alpha^{219}\}\\
	\{ \alpha^{0}, \alpha^{6}, \alpha^{10}, \alpha^{21}, \alpha^{39}, \alpha^{85}, \alpha^{91}, \alpha^{95}, \alpha^{106}, \alpha^{124}, \alpha^{170}, \alpha^{176}, \alpha^{180}, \alpha^{191}, \alpha^{209}\}\\
	\{ \alpha^{0}, \alpha^{13}, \alpha^{14}, \alpha^{38}, \alpha^{54}, \alpha^{85}, \alpha^{98}, \alpha^{99}, \alpha^{123}, \alpha^{139}, \alpha^{170}, \alpha^{183}, \alpha^{184}, \alpha^{208}, \alpha^{224}\}\\
	\{ \alpha^{0}, \alpha^{9}, \alpha^{32}, \alpha^{35}, \alpha^{37}, \alpha^{85}, \alpha^{94}, \alpha^{117}, \alpha^{120}, \alpha^{122}, \alpha^{170}, \alpha^{179}, \alpha^{202}, \alpha^{205}, \alpha^{207}\}
\end{align*}

This code  $\codeC$ is an $[8,4420,4,4]$-cyclic code.
The first sixteen orbits are sets with 255 subspaces and the remaining four with 85 subspaces.
\end{proof}

\begin{theorem}
$\mathcal{C}_2(8,4,4) \geq 4420$.
\end{theorem}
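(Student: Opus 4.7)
The plan is essentially to read off the inequality from the existence theorem that precedes it. By definition, $\mathcal{C}_2(8,4,4)$ is the largest $M$ such that an $(8,M,4,4)_2$ cyclic Grassmannian code exists, so any explicit cyclic code of minimum distance $\geq 4$ in $G_2(8,4)$ furnishes a lower bound equal to its cardinality. The previous theorem exhibits exactly such a code $\codeC$, built as the union of sixteen $\F_{2^8}^\ast$-orbits of size $255$ and four orbits of size $85$, so the proof is a one-line appeal.

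First I would verify the arithmetic
\[
|\codeC| \;=\; 16\cdot 255 + 4\cdot 85 \;=\; 4080 + 340 \;=\; 4420.
\]
Next I would note that $\codeC$ is cyclic by construction, being a union of $\F_{2^8}^\ast$-orbits, and that each listed generator determines a $4$-dimensional subspace: indeed, each set lists $15 = 2^4-1$ nonzero elements, and the orbit sizes $255 = (2^8-1)/(2^1-1)$ and $85 = (2^8-1)/(2^2-1)$ are exactly the values predicted by the orbit-length formula cited right before Theorem \ref{theorem1}. The remaining content is the claim $d(\codeC)\geq 4$, which is precisely what the previous theorem asserts, and combined with $|\codeC|=4420$ this yields $\mathcal{C}_2(8,4,4)\geq 4420$ at once.

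The only real work, and the step I would flag as the main obstacle, is certifying that the twenty listed orbits actually form a clique in the orbit graph $\mathcal{G}$ of Section 2: each individual orbit must have minimum subspace distance at least $4$, and every pair among the $\binom{20}{2}$ orbits must also have cross-distance at least $4$. This is a finite but large computation, and it is exactly the task Algorithms 1 and 2 automate; the verification is therefore delegated to the GAP/Mathematica pipeline described in the Remark following those algorithms, so no new mathematical argument is needed beyond the explicit witness already produced.
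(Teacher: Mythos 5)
Your proposal is correct and matches the paper's proof, which likewise just cites the preceding theorem exhibiting the explicit cyclic code with $16\cdot 255 + 4\cdot 85 = 4420$ codewords and minimum distance $4$. The additional remarks about orbit sizes and the computational verification of the clique condition are consistent with how the paper delegates that check to its algorithmic pipeline.
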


\begin{proof}
The previous theorem show a cyclic Grassmannian code with 4420 subspaces.
\end{proof}

\begin{theorem}
There is a cyclic code with 4590 codewords. That is, $\alpha_2=3$ and $\alpha_1=17$. 
\end{theorem}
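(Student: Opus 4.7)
The plan is to exhibit an explicit cyclic code with the required parameters, mirroring the computational approach used in the preceding theorem where the pair $(\alpha_1,\alpha_2)=(16,4)$ was realized. Concretely, I would fix a primitive polynomial of degree 8 over $\F_2$ (e.g., $x^8+x^4+x^3+x^2+1$, as in the previous theorem) and work inside the graph $\mathcal{G}$ from Algorithm 1, restricted to the vertices corresponding to orbits of $G_2(8,4)$ of size $255$ and $85$ whose intra-orbit minimum distance is at least $4$.

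First I would locate a $17$-clique inside the subgraph spanned by the size-$255$ orbits; by Lemma \ref{6.1} such a clique is maximum, so its existence gives $\alpha_1=17$. Next, holding this $17$-clique fixed, I would search the subgraph of size-$85$ orbits for the largest subset whose vertices are simultaneously adjacent to every vertex of the fixed $17$-clique and mutually adjacent among themselves. Lemma \ref{6.4} forbids $\alpha_2=4$ once $\alpha_1=17$ (since $\alpha_2=4$ forces $\alpha_1\leq 16$), so the best we can hope for in this branch is $\alpha_2=3$. The content of the theorem is therefore that this optimal value $3$ is actually attained, which is what the explicit construction will certify.

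To present the result in the same format as the $4420$ construction, I would list $17$ orbit representatives generating size-$255$ orbits together with $3$ representatives generating size-$85$ orbits, all written as sets of powers $\{\alpha^{i_1},\dots,\alpha^{i_{15}}\}$. Correctness amounts to checking: (a) each listed set spans a $4$-dimensional $\F_2$-subspace of $\F_{2^8}$; (b) each such subspace has stabilizer under the $\F_{2^8}^\ast$-action of order $1$ (for the first $17$) or $3$ (for the last $3$), giving the claimed orbit sizes $255$ and $85$; (c) every pair of orbits in the list has mutual minimum subspace distance $\geq 4$, and each orbit has internal minimum distance $\geq 4$. All of these are routine finite verifications that the GAP/Mathematica pipeline described in the Remark after Algorithm~1 performs automatically.

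The main obstacle is the combinatorial search in step two: the subgraph of size-$85$ orbits must be intersected with the common neighborhood of a specific $17$-clique, and a triangle must be found there. If no triangle exists in the common neighborhood of the first $17$-clique one discovers, one must either enumerate further $17$-cliques of the $255$-subgraph or enumerate $3$-cliques of the $85$-subgraph and check compatibility with a $17$-clique; this is the step that truly requires the computational backbone of Algorithms~1 and~2 rather than any additional theoretical input. Once a compatible configuration is found, the statement $\mathcal{C}_2(8,4,4)\geq 4590$ follows immediately from Theorem \ref{theorem1}.
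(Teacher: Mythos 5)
Your plan follows the same route as the paper---fix the primitive polynomial $x^8+x^4+x^3+x^2+1$, search the orbit graph for a $17$-clique of $255$-orbits and then for three mutually compatible $85$-orbits in its common neighborhood---but as written it does not prove the theorem. The statement is purely existential, and your text stops exactly where the existence has to be certified: you say you \emph{would} list $17+3$ orbit representatives, and you even note that the search in the common neighborhood of a given $17$-clique may fail, in which case one must try other $17$-cliques or start from triangles of $85$-orbits, without ever asserting (let alone exhibiting) that some compatible configuration does exist. The paper's proof consists precisely of that missing witness: an explicit list of seventeen generating sets whose orbits have $255$ subspaces and three generating sets whose orbits have $85$ subspaces, all with internal and pairwise minimum distance at least $4$, so that the union of the twenty orbits is an $[8,4590,4,4]$-cyclic code. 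Without the explicit data, or at least a definite statement that the computation was carried out and succeeded, your argument only shows that $4590$ is the best one could hope for in the $\alpha_1=17$ branch (via Lemmas \ref{6.1} and \ref{6.4}); it does not show that this value is attained, which is the entire content of the theorem.

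Everything else in your outline is sound: the verification criteria (a)--(c) are the right ones, orbit sizes $255$ and $85$ do correspond to stabilizers of order $1$ and $3$ in $\F_{2^8}^\ast$, and your reading of Lemma \ref{6.4} (that $\alpha_2=4$ forces $\alpha_1\leq 16$, so with $\alpha_1=17$ the target must be $\alpha_2=3$) is correct. So the gap is not conceptual but evidentiary: to complete the proof you must either reproduce the explicit twenty orbit representatives, as the paper does, or report the successful outcome of the clique search as a verifiable computational certificate.
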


\begin{proof}
Let $\alpha$ be a primitive root of $x^8+x^4+x^3+x^2+1$ and use this polynomial to generate the field $\F_{2^{8}}$. Let $\codeC \subseteq G_2(8,4)$ which consists of all cyclic shifts of
\begin{align*}
\{ \alpha^{0}, \alpha^{7}, \alpha^{30}, \alpha^{46}, \alpha^{66}, \alpha^{76}, \alpha^{87}, \alpha^{88}, \alpha^{89}, \alpha^{112}, \alpha^{113}, \alpha^{137}, \alpha^{167}, \alpha^{175}, \alpha^{203}\}\\
\{ \alpha^{0}, \alpha^{40}, \alpha^{41}, \alpha^{53}, \alpha^{65}, \alpha^{80}, \alpha^{84}, \alpha^{98}, \alpha^{124}, \alpha^{139}, \alpha^{147}, \alpha^{157}, \alpha^{162}, \alpha^{168}, \alpha^{180}\}\\
\{ \alpha^{0}, \alpha^{2}, \alpha^{31}, \alpha^{45}, \alpha^{50}, \alpha^{91}, \alpha^{110}, \alpha^{123}, \alpha^{126}, \alpha^{163}, \alpha^{182}, \alpha^{183}, \alpha^{205}, \alpha^{207}, \alpha^{209}\}\\
\{ \alpha^{0}, \alpha^{27}, \alpha^{59}, \alpha^{62}, \alpha^{82}, \alpha^{89}, \alpha^{90}, \alpha^{104}, \alpha^{114}, \alpha^{117}, \alpha^{122}, \alpha^{125}, \alpha^{166}, \alpha^{194}, \alpha^{203}\}\\
\{ \alpha^{0}, \alpha^{1}, \alpha^{25}, \alpha^{56}, \alpha^{64}, \alpha^{65}, \alpha^{70}, \alpha^{71}, \alpha^{89}, \alpha^{95}, \alpha^{109}, \alpha^{131}, \alpha^{162}, \alpha^{176}, \alpha^{203}\}\\
\{ \alpha^{0}, \alpha^{1}, \alpha^{25}, \alpha^{38}, \alpha^{81}, \alpha^{94}, \alpha^{124}, \alpha^{155}, \alpha^{156}, \alpha^{159}, \alpha^{160}, \alpha^{169}, \alpha^{180}, \alpha^{184}, \alpha^{202}\}\\
\{ \alpha^{0}, \alpha^{7}, \alpha^{9}, \alpha^{57}, \alpha^{62}, \alpha^{64}, \alpha^{70}, \alpha^{72}, \alpha^{83}, \alpha^{90}, \alpha^{112}, \alpha^{120}, \alpha^{156}, \alpha^{169}, \alpha^{195}\}\\
\{ \alpha^{0}, \alpha^{8}, \alpha^{16}, \alpha^{54}, \alpha^{69}, \alpha^{87}, \alpha^{125}, \alpha^{130}, \alpha^{145}, \alpha^{163}, \alpha^{167}, \alpha^{182}, \alpha^{194}, \alpha^{200}, \alpha^{208}\}\\
\{ \alpha^{0}, \alpha^{5}, \alpha^{10}, \alpha^{21}, \alpha^{37}, \alpha^{40}, \alpha^{76}, \alpha^{84}, \alpha^{113}, \alpha^{114}, \alpha^{138}, \alpha^{143}, \alpha^{150}, \alpha^{166}, \alpha^{179}\}\\
\{ \alpha^{0}, \alpha^{23}, \alpha^{64}, \alpha^{70}, \alpha^{79}, \alpha^{97}, \alpha^{110}, \alpha^{124}, \alpha^{126}, \alpha^{154}, \alpha^{174}, \alpha^{180}, \alpha^{190}, \alpha^{196}, \alpha^{201}\}\\
\{ \alpha^{0}, \alpha^{16}, \alpha^{31}, \alpha^{45}, \alpha^{49}, \alpha^{88}, \alpha^{114}, \alpha^{145}, \alpha^{155}, \alpha^{159}, \alpha^{166}, \alpha^{171}, \alpha^{175}, \alpha^{197}, \alpha^{211}\}\\
\{ \alpha^{0}, \alpha^{19}, \alpha^{47}, \alpha^{62}, \alpha^{78}, \alpha^{80}, \alpha^{90}, \alpha^{92}, \alpha^{101}, \alpha^{128}, \alpha^{140}, \alpha^{168}, \alpha^{205}, \alpha^{207}, \alpha^{212}\}\\
\{ \alpha^{0}, \alpha^{2}, \alpha^{29}, \alpha^{39}, \alpha^{49}, \alpha^{50}, \alpha^{60}, \alpha^{71}, \alpha^{74}, \alpha^{103}, \alpha^{106}, \alpha^{109}, \alpha^{132}, \alpha^{181}, \alpha^{197}\}\\
\{ \alpha^{0}, \alpha^{9}, \alpha^{28}, \alpha^{38}, \alpha^{47}, \alpha^{49}, \alpha^{93}, \alpha^{97}, \alpha^{101}, \alpha^{120}, \alpha^{158}, \alpha^{184}, \alpha^{190}, \alpha^{193}, \alpha^{197}\}\\
\{ \alpha^{0}, \alpha^{7}, \alpha^{47}, \alpha^{59}, \alpha^{79}, \alpha^{82}, \alpha^{91}, \alpha^{94}, \alpha^{101}, \alpha^{112}, \alpha^{148}, \alpha^{174}, \alpha^{202}, \alpha^{206}, \alpha^{209}\}\\
\{ \alpha^{0}, \alpha^{6}, \alpha^{12}, \alpha^{49}, \alpha^{53}, \alpha^{58}, \alpha^{107}, \alpha^{127}, \alpha^{147}, \alpha^{149}, \alpha^{156}, \alpha^{169}, \alpha^{188}, \alpha^{191}, \alpha^{197}\}\\
\{ \alpha^{0}, \alpha^{4}, \alpha^{30}, \alpha^{32}, \alpha^{35}, \alpha^{49}, \alpha^{66}, \alpha^{80}, \alpha^{94}, \alpha^{100}, \alpha^{117}, \alpha^{122}, \alpha^{168}, \alpha^{197}, \alpha^{202}\}\\
\{ \alpha^{0}, \alpha^{7}, \alpha^{19}, \alpha^{27}, \alpha^{49}, \alpha^{85}, \alpha^{92}, \alpha^{104}, \alpha^{112}, \alpha^{134}, \alpha^{170}, \alpha^{177}, \alpha^{189}, \alpha^{197}, \alpha^{219}\}\\
\{ \alpha^{0}, \alpha^{6}, \alpha^{10}, \alpha^{21}, \alpha^{39}, \alpha^{85}, \alpha^{91}, \alpha^{95}, \alpha^{106}, \alpha^{124}, \alpha^{170}, \alpha^{176}, \alpha^{180}, \alpha^{191}, \alpha^{209}\}\\
\{ \alpha^{0}, \alpha^{13}, \alpha^{14}, \alpha^{38}, \alpha^{54}, \alpha^{85}, \alpha^{98}, \alpha^{99}, \alpha^{123}, \alpha^{139}, \alpha^{170}, \alpha^{183}, \alpha^{184}, \alpha^{208}, \alpha^{224}\}
\end{align*}

This code  $\codeC$ is an $[8,4590,4,4]$-cyclic code.
The first seventeen orbits are sets with 255 subspaces and the remaining three with 85 subspaces.
\end{proof}

\begin{theorem}
 $\mathcal{C}_2(8,4,4) = 4590$.
\end{theorem}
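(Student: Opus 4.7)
The plan is to combine the constructive lower bound furnished by the preceding theorem (which exhibits a cyclic code with $4590$ codewords, so $\mathcal{C}_2(8,4,4)\geq 4590$) with a matching upper bound obtained by case analysis on the admissible triples $(\alpha_1,\alpha_2,\alpha_4)$ using Lemma \ref{6.1}, Lemma \ref{6.2}, and Lemma \ref{6.4}. Recall from Theorem \ref{theorem1} that
\[
\mathcal{C}_2(8,4,4)=255\alpha_1+85\alpha_2+17\alpha_4,
\]
so the question is simply which integer combinations of $(\alpha_1,\alpha_2,\alpha_4)$ are compatible with the clique conditions.

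For the upper bound I would split on whether the short orbit of length $17$ is used. If $\alpha_4=1$, Lemma \ref{6.2} forces $\alpha_1+\alpha_2\leq 3$, so the total is bounded by $255\cdot 3+17=782$. If $\alpha_4=0$, I would further split on the value of $\alpha_2$. In the subcase $\alpha_2=4$, Lemma \ref{6.4} yields $\alpha_1\leq 16$, giving at most $16\cdot 255+4\cdot 85=4420$ codewords. In the complementary subcase $\alpha_2\leq 3$, the bound $\alpha_1\leq 17$ from Lemma \ref{6.1} together with $17\cdot 255+3\cdot 85=4590$ closes the argument. The maximum of $\{782,\,4420,\,4590\}$ is $4590$, so $\mathcal{C}_2(8,4,4)\leq 4590$, which matched against the explicit construction gives equality.

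The real difficulty lies not in this final assembly, which is routine arithmetic, but rather in the preceding computational steps: certifying through the clique computations behind Lemmas \ref{6.2} and \ref{6.4} that adjoining a fourth orbit of length $85$ to any clique of three such orbits truly blocks the seventeenth orbit of length $255$ from being added, and exhibiting an explicit set of $17$ orbits of size $255$ together with $3$ orbits of size $85$ whose pairwise minimum subspace distance is at least $4$. These rely on the exhaustive search performed by Algorithms 1 and 2. Once those facts are in hand, the theorem follows immediately from the three case bounds above.
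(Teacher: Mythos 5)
Your proposal is correct and follows exactly the paper's route: the paper also deduces the equality directly from the explicit $[8,4590,4,4]$ construction together with Lemmas \ref{6.1}, \ref{6.2} and \ref{6.4}, and your case analysis on $(\alpha_1,\alpha_2,\alpha_4)$ just spells out the arithmetic the paper leaves implicit.
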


\begin{proof}
It follows directly from previous theorem and lemmas \ref{6.1}, \ref{6.2} and \ref{6.4}. 
\end{proof}

\subsection{Calculating the number $\mathcal{C}_2(8,4,3)$}

It follows from  Theorem \ref{theorem1}  that 
\[\mathcal{C}_2(8,4,3)=255\alpha_1+85\alpha_2+17\alpha_4.\]

\begin{lemma}
Let $\codeC \subseteq G_2(8,3)$ a cyclic code with minimum distance 4. Then 
\begin{enumerate}[(1)]
\item $\alpha_1 \leq 5$;
\item $\alpha_2 \leq 0$;
\item $\alpha_4 \leq 0$.
\end{enumerate}
\end{lemma}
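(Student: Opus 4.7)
The plan is to mirror the approach used in the preceding lemmas of this section: split the set of orbits of $G_2(8,3)$ by their size, run Algorithm 2 on each block separately, and read off the clique number of the resulting graph as the upper bound on the corresponding $\alpha_t$. Since $n=8$ and $t\mid 8$, the possible orbit sizes are $\tfrac{2^8-1}{2^t-1}\in\{255,85,17,1\}$, corresponding to $t\in\{1,2,4,8\}$, and the trivial size $1$ does not appear in the code-size formula. So it suffices to handle the three blocks of orbits of sizes $255$, $85$ and $17$.

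First I would enumerate $G_2(8,3)$ with GAP, compute the orbit partition under the $\F_{2^8}^\ast$-action, and for each orbit $O$ compute $d_O$ via Java, exactly as described in Section 2. For the block of orbits of size $17$ (those coming from $t=4$) and for the block of size $85$ (those coming from $t=2$), I expect the filter $d_O\ge 4$ to remove every orbit: the vertex set of the corresponding graph will be empty, which yields clique number $0$ and hence $\alpha_4\le 0$ and $\alpha_2\le 0$. This is the direct analogue of the arguments given for $\mathcal{C}_2(6,4,3)$ and $\mathcal{C}_2(6,4,2)$, where the 85-type and 17-type graphs turned out to be null.

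The main work is the bound $\alpha_1\le 5$. Here I would build the graph $\mathcal{G}_{255}$ whose vertices are the orbits of size $255$ with $d_O\ge 4$, joining two such vertices $O_1,O_2$ when $D(O_1,O_2)\ge 4$, and then feed $\mathcal{G}_{255}$ to Wolfram Mathematica to compute $\omega(\mathcal{G}_{255})$. By Algorithm 2, any cyclic code consisting only of orbits of size $255$ with minimum distance at least $4$ corresponds to a clique in $\mathcal{G}_{255}$, so $\alpha_1\le \omega(\mathcal{G}_{255})$. The claim is that this clique number is exactly $5$.

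The expected obstacle is computational rather than conceptual: the number of orbits of size $255$ in $G_2(8,3)$ is large, so building $\mathcal{G}_{255}$ requires evaluating $D(O_1,O_2)$ on many pairs, each of which is itself a minimum over the union of two orbits of size $255$. To keep this tractable I would exploit the fact that $D(O_1,O_2)=\min_{V\in O_2}d(U,V)$ for any fixed $U\in O_1$, thanks to the cyclic-shift invariance of the subspace distance, and precompute distances only against a chosen orbit representative. Once the graph is assembled, the maximum-clique search on the resulting (relatively sparse, by the $d\ge 4$ condition) graph should be feasible; verifying $\omega(\mathcal{G}_{255})=5$ then completes item~(1), while items~(2) and~(3) follow from the empty-graph observations above.
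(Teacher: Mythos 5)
Your proposal follows essentially the same computational route as the paper: split the orbits of $G_2(8,3)$ by size, run the clique computation of Algorithm 2, and read off the clique number (five, with no six-clique) of the graph on the $255$-orbits to get $\alpha_1\leq 5$. The only minor discrepancy is in items (2) and (3): the paper's reason is more basic than your expected distance filter, since the stabilizer field degree $t$ must divide $k=3$ as well as $n=8$, so there are no orbits of size $85$ or $17$ in $G_2(8,3)$ at all and those vertex sets are empty before any filtering --- either way the bounds $\alpha_2\leq 0$ and $\alpha_4\leq 0$ follow.
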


\begin{proof}
There are not orbits with 85 and 17 subspaces. There is a clique with five orbits, and there is not one with six orbits.
\end{proof}

\begin{theorem}
There is a cyclic code with 1275 codewords. That is,  $\alpha_1=5$.
\end{theorem}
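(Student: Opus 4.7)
The plan is to close the gap between the upper bound $\alpha_1 \leq 5$ obtained in the preceding lemma and the claimed existence of a code attaining $1275 = 5\cdot 255$ codewords by exhibiting five explicit orbits of dimension-$3$ subspaces in $\F_{2^8}$, each of size $255$, whose pairwise union has minimum subspace distance at least $4$. This mirrors the constructive style used in the earlier theorems of the section, where a primitive polynomial together with a list of exponents parameterizes each orbit.

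To carry out the construction, I would fix a primitive polynomial for $\F_{2^8}$, for instance $x^8 + x^4 + x^3 + x^2 + 1$ as in the $\mathcal{C}_2(8,4,4)$ computation, and let $\alpha$ be a primitive element. Following Algorithm 1 with parameters $(n,k,d,q) = (8,3,4,2)$, I would enumerate all orbits in $G_2(8,3)$ of size $255$ with internal minimum distance at least $4$, construct the orbit graph in which two orbits are adjacent when $D(\orb(V_1),\orb(V_2)) \geq 4$, and locate a $5$-clique in this graph. The previous lemma guarantees that the clique number of this graph equals $5$, since the bound $\alpha_1 \leq 5$ was derived precisely by finding a $5$-clique and verifying that no $6$-clique exists.

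Having identified the five orbits, the proof writes out, for each one, the seven nonzero elements of a representative dimension-$3$ subspace as powers of $\alpha$, in the same format as the lists used in the $[8,4590,4,4]$ and $[8,4420,4,4]$ constructions above. Taking the union of the five full orbits gives a set of $5\cdot 255 = 1275$ subspaces; this set is closed under multiplication by $\F_{2^8}^\ast$ by construction, hence cyclic, and the minimum distance $4$ follows from both the internal bound on each orbit and the edge condition used to form the clique.

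The principal obstacle is computational rather than conceptual: the number of candidate orbits of size $255$ in $G_2(8,3)$ is large, and identifying an explicit $5$-clique requires running the GAP/Java/Mathematica pipeline described in the Remark to compute internal orbit distances, assemble the adjacency graph, and extract a maximum clique. Once such a clique is in hand, writing down the five generator subspaces and verifying that the resulting code is an $[8,1275,4,3]$-cyclic code is routine and identical in format to the verifications already performed in the preceding theorems; combined with the upper bound from the previous lemma, this yields $\mathcal{C}_2(8,4,3) = 1275$.
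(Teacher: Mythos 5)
Your plan is the same as the paper's: fix the primitive polynomial $x^8+x^4+x^3+x^2+1$, find a $5$-clique among the orbits of size $255$ in $G_2(8,3)$ with the distance-$4$ adjacency condition, and take the union of the five orbits to get an $[8,1275,4,3]$-cyclic code. The difference is that the paper's proof \emph{is} the witness: it lists, for each of the five orbits, the seven nonzero elements of a representative $3$-dimensional subspace as powers of $\alpha$, so the claim can in principle be checked directly. Your write-up stops exactly where the content of the proof begins --- you describe how one would run the GAP/Java/Mathematica pipeline to locate such a clique, but you never exhibit the five generator subspaces, so no verifiable object is produced.

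The fallback you offer, that the preceding lemma ``guarantees that the clique number of this graph equals $5$,'' does not quite hold as stated: the lemma's statement only gives the upper bound $\alpha_1\leq 5$, which by itself says nothing about attainability (an upper bound on clique size is consistent with the maximum clique being smaller). The assertion that a $5$-clique actually exists is a separate computational fact, recorded only inside the lemma's proof, and it is precisely the fact this theorem is supposed to certify by explicit construction. So, as written, your proposal is a correct plan but not a proof; to complete it you must either write down the five subspaces (as the paper does) or at least explicitly state and substantiate the existence of a $5$-clique rather than inferring it from how the upper bound was obtained.
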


\begin{proof}
Let $\alpha$ be a primitive root of $x^8+x^4+x^3+x^2+1$ and use this polynomial to generate the field $\F_{2^{8}}$. Let $\codeC \subseteq G_2(8,3)$ which consists of all cyclic shifts of
\begin{align*}
\{ \alpha^{0}, \alpha^{27}, \alpha^{34}, \alpha^{98}, \alpha^{104}, \alpha^{136}, \alpha^{139}\}\\
\{ \alpha^{0}, \alpha^{58}, \alpha^{60}, \alpha^{107}, \alpha^{108}, \alpha^{132}, \alpha^{161}\}\\
\{ \alpha^{0}, \alpha^{76}, \alpha^{80}, \alpha^{95}, \alpha^{113}, \alpha^{168}, \alpha^{176}\}\\
\{ \alpha^{0}, \alpha^{20}, \alpha^{42}, \alpha^{59}, \alpha^{82}, \alpha^{110}, \alpha^{126}\}\\
\{ \alpha^{0}, \alpha^{13}, \alpha^{69}, \alpha^{99}, \alpha^{130}, \alpha^{135}, \alpha^{144}\}
\end{align*}
This code  $\codeC$ is an $[8,1275,4,3]$-cyclic code. Every orbit has 255 subspaces. 
\end{proof}

\begin{theorem}
 $\mathcal{C}_2(8,4,3) = 1275$.
\end{theorem}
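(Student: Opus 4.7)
The plan is to sandwich $\mathcal{C}_2(8,4,3)$ between a matching upper and lower bound, both of which are already essentially in hand from the preceding material in this subsection.

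For the upper bound, I would invoke Theorem \ref{theorem1}, which gives
\[\mathcal{C}_2(8,4,3)=255\alpha_1+85\alpha_2+17\alpha_4,\]
together with the bounds $\alpha_1\leq 5$, $\alpha_2\leq 0$, and $\alpha_4\leq 0$ from the lemma just proved (using that no orbits of size $85$ or $17$ carry minimum distance $\geq 4$, and that the clique number in the subgraph on $255$-subspace orbits is exactly $5$). Substituting these values yields
\[\mathcal{C}_2(8,4,3)\leq 255\cdot 5+85\cdot 0+17\cdot 0=1275.\]

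For the lower bound, I would appeal to the explicitly constructed $[8,1275,4,3]$-cyclic code in the immediately preceding theorem, which consists of five orbits of size $255$ with pairwise subspace distance at least $4$. This witnesses $\mathcal{C}_2(8,4,3)\geq 1275$.

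Combining the two bounds gives the equality $\mathcal{C}_2(8,4,3)=1275$. There is no real obstacle here: the nontrivial work is concentrated in the clique computation underlying the bound $\alpha_1\leq 5$ and in exhibiting five mutually compatible orbits, both of which are already carried out via Algorithm~2 and the explicit list of orbit representatives. The remaining step is just arithmetic.
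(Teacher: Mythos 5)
Your proposal is correct and follows the same route as the paper: the paper's proof simply says ``see the previous theorem,'' implicitly combining the explicit five-orbit $[8,1275,4,3]$-code (lower bound) with the lemma's bounds $\alpha_1\leq 5$, $\alpha_2\leq 0$, $\alpha_4\leq 0$ and Theorem \ref{theorem1} (upper bound). You have merely made both halves of that sandwich explicit.
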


\begin{proof}
See the previous theorem.
\end{proof}

\subsection{Calculating the number $\mathcal{C}_2(8,6,4)$}

\begin{theorem}
 $\mathcal{C}_2(8,6,4) = 0$
\end{theorem}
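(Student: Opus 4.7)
The plan is to apply Theorem~\ref{theorem1} to decompose
\[
\mathcal{C}_2(8,6,4) = 255\alpha_1 + 85\alpha_2 + 17\alpha_4,
\]
so the goal reduces to certifying that $\alpha_1 = \alpha_2 = \alpha_4 = 0$. For each divisor $t \in \{1,2,4\}$ of $n=8$, I would run Algorithm~2 with inputs $(n,d,k,q,t) = (8,6,4,2,t)$: enumerate every orbit in $G_2(8,4)$ of size $(2^8-1)/(2^t-1)\in\{255,85,17\}$, discard each orbit whose internal minimum subspace distance is below $6$, and read off the clique number of the resulting graph as an upper bound for $\alpha_t$. The theorem amounts to the claim that, after filtering, no vertex survives in any of the three graphs.

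For $t=1$ and $t=2$, I would invoke the GAP\,+\,Java pipeline described in the Remark following Algorithm~1 to enumerate the orbits of sizes $255$ and $85$ in $G_2(8,4)$ and to verify that for each such orbit $\orb(V)$ there exists some pair $V$, $\alpha V$ with $\dim_{\F_2}(V \cap \alpha V) \ge 2$, i.e.\ internal distance $\le 4 < 6$. Mirroring the arguments used earlier in this paper (e.g.\ Lemma~\ref{6.1}), this step should reduce to a purely computational check that yields $\alpha_1 = \alpha_2 = 0$.

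The main obstacle will be the case $t=4$. There is a single orbit of size $17$ in $G_2(8,4)$, namely the orbit of the subfield $\F_{2^4} \subset \F_{2^8}$ under scalar multiplication by $\F_{2^8}^{\ast}$; this is the exact analog of the $\F_{2^3} \subset \F_{2^6}$ orbit used earlier to attain $\mathcal{C}_2(6,6,3) = 9$. The delicate point is to certify, under the conventions adopted here, that this unique orbit fails the distance-$6$ threshold and must be excluded in Algorithm~2, forcing $\alpha_4 = 0$. Once this is in hand, the upper bound $\mathcal{C}_2(8,6,4) \le 0$ follows from Theorem~\ref{theorem1}, the reverse inequality is trivial, and the equality is established.
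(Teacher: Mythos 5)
There is a genuine gap, and it sits exactly where you yourself flag ``the main obstacle'': the case $t=4$. The unique orbit of size $17$ in $G_2(8,4)$ is the orbit of the subfield $\F_{2^4}\subset\F_{2^8}$, and for that orbit any two \emph{distinct} shifts $\alpha\F_{2^4}\neq\beta\F_{2^4}$ intersect trivially (a nonzero common element would force $\alpha\beta^{-1}\in\F_{2^4}$), so its internal minimum distance is $4+4-0=8$. Consequently this orbit does \emph{not} ``fail the distance-$6$ threshold'': under the paper's stated definition of an $(n,M,d,k)_q$ code, which only requires $d(X,Y)\ge d$, the subfield orbit is a perfectly valid cyclic code with all pairwise distances $\ge 6$, giving $\alpha_4=1$ and $\mathcal{C}_2(8,6,4)\ge 17$ — indeed this is the very orbit the paper uses to prove $\mathcal{C}_2(8,8,4)=17$. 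So the certification you defer to Algorithm~2 cannot succeed as described, and your plan never closes the case it identifies as delicate.

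The statement $\mathcal{C}_2(8,6,4)=0$ is only coherent if ``minimum distance $6$'' is read as \emph{exactly} $6$ (as the paper's tables and its one-line proof ``there is no orbit with minimum distance 6'' implicitly do; note that distances between $4$-dimensional subspaces of $\F_2^8$ are even, so the possible values are $2,4,6,8$). Under that reading the correct argument is: the only orbit all of whose internal distances are $\ge 6$ is the $17$-element subfield orbit, and its minimum distance is $8$; since no union of admissible orbits can realize a pair at distance exactly $6$, no cyclic code with minimum distance exactly $6$ exists. Your proposal neither adopts this exact-distance convention nor supplies this argument — for $t=1,2$ your computational filtering matches the paper's methodology (Theorem~\ref{theorem1} plus Algorithm~2, as in Lemma~\ref{6.1}), but the decisive step $\alpha_4=0$ is left unproved, and as formulated it is unprovable.
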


\begin{proof}
There is no orbit with minimum distance 6.
\end{proof}

\subsection{Calculating the number  $\mathcal{C}_2(8,8,4)$}

It follows from  Theorem \ref{theorem1}  that 
\[\mathcal{C}_2(8,8,4)=255\alpha_1+85\alpha_2+17\alpha_4.\]

\begin{lemma}
Let $\codeC \subseteq G_2(8,4)$ a cyclic code with minimum distance 8. Then  
\begin{enumerate}[(1)]
\item $\alpha_1 \leq 0$;
\item $\alpha_2 \leq 0$;
\item $\alpha_4 \leq 1$.
\end{enumerate}
\end{lemma}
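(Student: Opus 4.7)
The plan is to bypass the general clique algorithm and rely on the classical partial spread bound, which at minimum distance $8$ is sharp enough to settle all three inequalities at once.

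First, I would observe that for $U,V \in G_2(8,4)$ the subspace distance is $d(U,V) = 8 - 2\dim(U\cap V)$, so $d(U,V)\geq 8$ forces $U \cap V = \{0\}$ (and conversely, $8$ is the maximum possible value). Any cyclic code $\codeC \subseteq G_2(8,4)$ of minimum distance $8$ is therefore a partial spread of $\F_2^8$ by $4$-subspaces, and this property automatically descends to each constituent orbit since every orbit is a subset of $\codeC$.

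Second, I would invoke the standard partial-spread counting bound: the $2^4-1=15$ nonzero vectors in each subspace are disjoint from those of every other subspace in the code, so
\[
15\cdot|\codeC| \;\leq\; 2^8-1 \;=\; 255,
\]
which yields $|\codeC|\leq 17$.

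Finally, I would apply this bound to the decomposition $|\codeC|=255\alpha_1+85\alpha_2+17\alpha_4$ orbit class by orbit class. A single orbit of size $255$ or $85$ would already overshoot $17$, so no such orbit can have internal minimum distance $8$; hence $\alpha_1\leq 0$ and $\alpha_2\leq 0$. A single orbit of size $17$ saturates the bound exactly (and, when it exists, corresponds to a Desarguesian spread coming from $\F_{2^4}\subset\F_{2^8}$), while two distinct such orbits would contribute $34>17$ subspaces, giving $\alpha_4\leq 1$. The step that at first looks like it might need more work is the passage from pairwise-disjointness within each orbit to pairwise-disjointness across orbits, but this is automatic from the hypothesis that the whole union has minimum distance $8$. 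No genuine obstacle remains, and the conclusion is consistent with what Algorithm~2 of Section~2 would output on input $(n,d,k,q)=(8,8,4,2)$.
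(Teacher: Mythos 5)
Your proof is correct, but it takes a genuinely different route from the paper. The paper's proof is purely computational: it reports the outcome of the orbit/clique search (no orbit of $255$ or $85$ subspaces attains minimum distance $8$, and exactly one orbit of $17$ subspaces does), which is exactly what Algorithm~2 certifies on these parameters. You instead replace the search by the classical partial-spread argument: since $d(U,V)=8-2\dim(U\cap V)$ for $U,V\in G_2(8,4)$, minimum distance $8$ forces pairwise trivial intersections, and counting nonzero vectors gives $15\,|\codeC|\leq 2^8-1$, i.e.\ $|\codeC|\leq 17$; combined with $|\codeC|=255\alpha_1+85\alpha_2+17\alpha_4$ and $\alpha_t\geq 0$ this immediately yields $\alpha_1=\alpha_2=0$ and $\alpha_4\leq 1$. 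Your argument is computation-free, works for any $q$ and any $n=2k$ with the obvious modifications, and in fact proves the slightly stronger fact that no orbit of size $255$ or $85$ can even internally have minimum distance $8$. What it does not provide (and does not need to, for this lemma) is the existence and uniqueness of the $17$-subspace orbit of distance $8$, which the paper's computation supplies and which is used afterwards to conclude $\mathcal{C}_2(8,8,4)=17$; in your framework that existence step would still have to come from the explicit Desarguesian spread orbit exhibited in the subsequent theorem.
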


\begin{proof}
There is a unique orbit with a minimum distance of 8 and 17 subspaces.
\end{proof}

\begin{theorem}
There is a cyclic code with 17 codewords. That is,  $\alpha_4=1$.
\end{theorem}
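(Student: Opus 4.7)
The plan is to exhibit an orbit of size $17$ in $G_2(8,4)$ whose minimum subspace distance is $8$, which together with the preceding lemma immediately yields $\alpha_4 = 1$.  The natural candidate is the Desarguesian spread arising from the subfield chain $\F_{2^4} \subseteq \F_{2^8}$.  Concretely, I take $V$ to be $\F_{2^4}$ regarded as an $\F_2$-subspace of $\F_{2^8}$; since $[\F_{2^4}:\F_2]=4$, we have $V \in G_2(8,4)$.

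First I would compute the orbit size.  For any $\alpha\in\F_{2^8}^\ast$, one has $\alpha V = V$ iff $\alpha\cdot\F_{2^4} = \F_{2^4}$, which happens iff $\alpha \in \F_{2^4}^\ast$, because $1 \in V$ forces $\alpha \in V$, and conversely $V$ is closed under multiplication by elements of $\F_{2^4}^\ast$.  Hence the stabilizer of $V$ in $\F_{2^8}^\ast$ is $\F_{2^4}^\ast$, and by the orbit-stabilizer relation stated in the introduction (with $t=4$),
\[
|\orb(V)| \;=\; \frac{2^{8}-1}{2^{4}-1} \;=\; 17.
\]

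Next I would verify the distance.  The subspace distance between two elements of $G_2(8,4)$ is $d(U,W) = 2\cdot 4 - 2\dim(U\cap W)$, so minimum distance $8$ is equivalent to $U\cap W = \{0\}$ for every pair of distinct codewords; that is, the orbit must form a partition of $\F_{2^8}\setminus\{0\}$ into $17$ disjoint $15$-element pieces (a spread).  For distinct cosets $\alpha\F_{2^4}^\ast$ and $\beta\F_{2^4}^\ast$ of $\F_{2^4}^\ast$ inside $\F_{2^8}^\ast$, the underlying subspaces $\alpha V$ and $\beta V$ intersect only in $0$: any common nonzero vector would give $\alpha v_1 = \beta v_2$ with $v_1,v_2\in\F_{2^4}^\ast$, hence $\alpha\beta^{-1}\in\F_{2^4}^\ast$, contradicting distinctness of cosets.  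Thus the $17$ shifts are pairwise trivially intersecting, and the orbit attains minimum distance exactly $8$.

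The only step requiring any care is the stabilizer computation; everything else is a one-line check.  Combining the construction with Lemma~6.1's bound $\alpha_4 \le 1$ (and the observation that this orbit indeed realises $\alpha_4 = 1$), we obtain a cyclic $[8,17,8,4]$-code, proving the theorem.  If desired, an explicit list of exponents $\{\alpha^{0},\alpha^{17},\alpha^{34},\ldots,\alpha^{17\cdot 16}\}$ (for a primitive root $\alpha$ of a degree-$8$ primitive polynomial over $\F_2$) may be recorded in the style of the previous theorems, since these powers are precisely coset representatives of $\F_{2^4}^\ast$ in $\F_{2^8}^\ast$.
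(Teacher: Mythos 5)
Your construction is correct and in fact produces the very same codeword as the paper: with $\alpha$ a primitive element of $\F_{2^8}$, the powers $\alpha^{0},\alpha^{17},\alpha^{34},\dots,\alpha^{238}$ listed in the paper's proof are exactly the elements of the order-$15$ subgroup $\F_{2^4}^\ast$, so the paper's orbit representative is precisely your subfield $V=\F_{2^4}$. The difference is in how the claimed parameters are justified. The paper simply exhibits the generating set and asserts (on the strength of its computations) that the resulting code is an $[8,17,8,4]$ cyclic code consisting of a single orbit, whereas you actually prove both parameters: the stabilizer computation $\{\alpha:\alpha V=V\}=\F_{2^4}^\ast$ gives $|\orb(V)|=(2^8-1)/(2^4-1)=17$, and the coset argument shows distinct shifts $\alpha V,\beta V$ meet only in $0$, so the orbit is a spread and the distance is exactly $8$. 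Your route buys a self-contained, computer-free verification (and it generalizes immediately: for any $t\mid n$ the subfield $\F_{q^t}$ yields an orbit of $\frac{q^n-1}{q^t-1}$ pairwise trivially intersecting $t$-spaces of distance $2t$), at the cost of not displaying the explicit exponent list -- though, as you note, that list is just a coset-representative description of the same subspace. One small slip: the bound $\alpha_4\le 1$ you invoke comes from the lemma in the $\mathcal{C}_2(8,8,4)$ subsection (distance $8$), not from Lemma \ref{6.1} (distance $4$); since the theorem only asserts existence of a code with $\alpha_4=1$, that bound is not actually needed for the statement.
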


\begin{proof}
Let $\alpha$ be a primitive root of $x^8+x^4+x^3+x^2+1$ and use this polynomial to generate the field $\F_{2^{8}}$. Let $\codeC \subseteq G_2(8,4)$   which consists of all cyclic shifts of
\[\{ \alpha^{0}, \alpha^{17}, \alpha^{34}, \alpha^{51}, \alpha^{68}, \alpha^{85}, \alpha^{102}, \alpha^{119}, \alpha^{136}, \alpha^{153}, \alpha^{170}, \alpha^{187}, \alpha^{204}, \alpha^{221}, \alpha^{238}\}.\]
This code  $\codeC$ is an $[8,17,8,4]$-cyclic code.  It consists of a unique orbit.
\end{proof}

\begin{theorem}
 $\mathcal{C}_2(8,8,4) = 17$.
\end{theorem}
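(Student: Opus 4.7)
The plan is to sandwich $\mathcal{C}_2(8,8,4)$ between matching upper and lower bounds, both of which are essentially already established in the material preceding the statement.

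For the upper bound, I would invoke Theorem \ref{theorem1} in the form specialized at the start of this subsection, namely
\[
\mathcal{C}_2(8,8,4) = 255\alpha_1 + 85\alpha_2 + 17\alpha_4,
\]
and then substitute the orbit-by-orbit bounds from the preceding lemma, which gives $\alpha_1 \leq 0$, $\alpha_2 \leq 0$, and $\alpha_4 \leq 1$. This yields $\mathcal{C}_2(8,8,4) \leq 255 \cdot 0 + 85 \cdot 0 + 17 \cdot 1 = 17$. The content behind these three inequalities is that among the orbits in $G_2(8,4)$ only the single orbit of size $17$ realizes minimum subspace distance $8$, so no orbit of size $255$ or $85$ can contribute and at most one orbit of size $17$ can appear in $\codeC$.

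For the lower bound, I would simply cite the explicit construction in the immediately preceding theorem: the cyclic shifts of the set $\{\alpha^{17i} : 0 \leq i \leq 14\}$ form an $[8,17,8,4]$-cyclic code, so $\mathcal{C}_2(8,8,4) \geq 17$. Combining this with the upper bound above finishes the proof.

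The only non-routine step is the upper bound, and within it the only non-trivial point is the uniqueness of an orbit of size $17$ with minimum distance $8$ in $G_2(8,4)$; but this is exactly what is guaranteed by the lemma stated just above the theorem, so from the perspective of this proof it is a given. No additional obstacle is expected.
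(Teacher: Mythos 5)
Your proposal is correct and matches the paper's (implicit) argument exactly: the paper states this theorem without a written proof, relying precisely on the preceding lemma ($\alpha_1\leq 0$, $\alpha_2\leq 0$, $\alpha_4\leq 1$, giving the upper bound $17$ via the decomposition from Theorem \ref{theorem1}) together with the explicit $[8,17,8,4]$ orbit construction for the lower bound. Nothing further is needed.
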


\subsection{Calculating the number $\mathcal{C}_2(8,6,3)$}

\begin{theorem}
 $\mathcal{C}_2(8,6,3) = 0$.
\end{theorem}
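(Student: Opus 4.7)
The plan is to apply Theorem \ref{theorem1}, which gives
\[\mathcal{C}_2(8,6,3)=255\alpha_1+85\alpha_2+17\alpha_4,\]
and then show each of $\alpha_1$, $\alpha_2$, $\alpha_4$ must vanish.

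First I would eliminate $\alpha_2$ and $\alpha_4$ at no cost by a structural argument on orbit sizes. By \cite[Lemma 9]{Etzion1} the orbit size parameter $t$ must divide $n=8$. Moreover, the stabilizer of $V$ in $\F_{2^8}^\ast$ is cyclic of order $2^t-1$, and the unique such subgroup of $\F_{2^8}^\ast$ is $\F_{2^t}^\ast$; together with $0$ this forces $V$ to be an $\F_{2^t}$-vector space, so $t$ must also divide $\dim_{\F_2}V=3$. The only common divisor of $8$ and $3$ is $t=1$, so every orbit in $G_2(8,3)$ has size exactly $255$ and $\alpha_2=\alpha_4=0$ automatically, matching the observation already made for $\mathcal{C}_2(8,4,3)$.

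It remains to show that no orbit of size $255$ has minimum distance at least $6$. For any $V\in G_2(8,3)$, the subspace distance reduces to $d(V,\alpha V)=6-2\dim(V\cap\alpha V)$, so $d(V,\alpha V)\geq 6$ is equivalent to $V\cap\alpha V=\{0\}$. I would run a double count of pairs $(v,w)\in(V\setminus\{0\})^2$ indexed by their ratio $\alpha=v/w\in\F_{2^8}^\ast$, which yields
\[\sum_{\alpha\in\F_{2^8}^\ast}\bigl|(V\cap\alpha V)\setminus\{0\}\bigr|\;=\;|V\setminus\{0\}|^2\;=\;49.\]
The term $\alpha=1$ contributes $|V\setminus\{0\}|=7$, leaving a residual of $42$ to be distributed over the $254$ values $\alpha\neq 1$. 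Since the stabilizer is trivial we have $\alpha V\neq V$ for every such $\alpha$, and at least one of them must satisfy $\dim(V\cap\alpha V)\geq 1$, i.e.\ $d(V,\alpha V)\leq 4$. Hence the minimum distance of $\orb(V)$ is at most $4$ for every $V$, forcing $\alpha_1=0$.

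Combining these steps gives $\mathcal{C}_2(8,6,3)=0$. I do not foresee a genuine obstacle: both the divisibility step and the pigeonhole are very short once the subspace distance formula and \cite[Lemma 9]{Etzion1} are in hand. The only delicate point is justifying cleanly that the cyclic stabilizer genuinely provides the subfield structure needed to conclude $t\mid\dim_{\F_2}V$; this is what rules out the length-$n=6$ escape route (there $t=3\mid\dim V$ is allowed and indeed supplies the unique $9$-element orbit witnessing $\mathcal{C}_2(6,6,3)=9$), an option that simply does not exist at $n=8$.
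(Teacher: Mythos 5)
Your proof is correct, but it is a genuinely different argument from the paper's. The paper disposes of $\mathcal{C}_2(8,6,3)=0$ with a one-line appeal to its exhaustive computer search: ``there is no orbit with minimum distance of 6.'' You instead prove this fact theoretically, in two steps, neither of which needs a computation. First, the stabilizer argument: $\mathrm{Stab}(V)\cup\{0\}$ is closed under addition (since $V$ is closed under addition) and multiplication, hence is a subfield $\F_{2^t}$ with $t\mid 8$, and $V$ becomes an $\F_{2^t}$-space, so $t\mid 3$; as $\gcd(8,3)=1$ this forces $t=1$, so every orbit has $255$ subspaces and trivial stabilizer, killing $\alpha_2$ and $\alpha_4$ (this recovers, with proof, the paper's bare assertion in the $\mathcal{C}_2(8,4,3)$ subsection that no orbits of sizes $85$ or $17$ exist). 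Second, your double count $\sum_{\alpha}\bigl|(V\cap\alpha V)\setminus\{0\}\bigr|=(2^3-1)^2=49$, of which $\alpha=1$ accounts for only $7$, leaves positive mass on some $\alpha\neq 1$; triviality of the stabilizer makes $\alpha V\neq V$, and $d(V,\alpha V)=6-2\dim(V\cap\alpha V)\leq 4$, so no orbit (hence no nonempty cyclic code, every such code containing a full orbit) attains distance $6$, i.e.\ $\alpha_1=0$. What your route buys is generality and insight: the same pigeonhole shows that any orbit with trivial stabilizer in $G_2(n,k)$, $k\geq 2$, has minimum distance at most $2k-2$, and the divisibility step explains exactly why the $n=6$, $k=3$ case escapes (there $t=3$ is allowed, giving the $9$-element orbit with distance $6$), whereas the paper's method gives only the verified instance. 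The only point to state carefully is the closure-under-addition step identifying the stabilizer with a subfield, which you flag yourself; it is standard (it is the ``stabilizer subfield'' of Gluesing-Luerssen et al.) and easy to include.
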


\begin{proof}
There is no orbit with minimum distance of 6.
\end{proof}

\subsection{Calculating the number $\mathcal{C}_2(8,8,3)$}

\begin{theorem}
 $\mathcal{C}_2(8,8,3) = 0$.
\end{theorem}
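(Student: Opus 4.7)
The plan is to apply the elementary bound on the subspace distance, which by itself is enough to settle the theorem. For any $U,V\in G_q(n,k)$ the (injection) subspace distance satisfies
\[
d(U,V) \;=\; \dim(U)+\dim(V)-2\dim(U\cap V) \;=\; 2k-2\dim(U\cap V),
\]
so $d(U,V)\le 2k$. For the parameters at hand, namely $n=8$, $k=3$, $q=2$, this immediately gives $d(U,V)\le 6$ for every pair of distinct $3$-dimensional subspaces of $\F_2^8$.

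Next I would observe that any cyclic code $\codeC\subseteq G_2(8,3)$ is a union of full orbits, and every orbit has size $\tfrac{q^n-1}{q^t-1}$ for some divisor $t\mid n$, which in the present case is at least $\tfrac{2^8-1}{2^4-1}=17>1$. Indeed, the divisor $t$ must be strictly smaller than $n$, since a $3$-dimensional $\F_2$-subspace of $\F_{2^8}$ cannot be stable under multiplication by all of $\F_{2^8}^\ast$. Hence, if $\codeC$ contains any orbit at all, it contains at least two distinct subspaces, whose subspace distance is then at most $6$, violating the required $d\ge 8$.

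Combining the two observations, no nontrivial cyclic $(8,M,8,3)_2$ Grassmannian code exists, and therefore $\mathcal{C}_2(8,8,3)=0$. There is no genuine obstacle here: in contrast with the other cases treated in this section, neither an orbit enumeration nor a clique search via Algorithms~1 or~2 is required, because the prescribed distance $d=8$ already exceeds the absolute maximum $2k=6$ attainable in $G_2(8,3)$.
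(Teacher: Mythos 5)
Your proof is correct, and it takes a genuinely different route from the paper. The paper settles this case the same way as its neighbours, by a computational check: it simply reports that the orbit enumeration produced no orbit of $G_2(8,3)$ with minimum distance $8$. You instead observe that the question is void a priori: since $d(U,V)=2k-2\dim(U\cap V)\le 2k=6$ for any two distinct $3$-dimensional subspaces, no pair of codewords in $G_2(8,3)$ can be at distance $8$, and since every orbit under $\F_{2^8}^\ast$ has size $\tfrac{2^8-1}{2^t-1}\ge 17>1$ (the stabilizer cannot be all of $\F_{2^8}^\ast$, as $V$ would then be an $\F_{2^8}$-subspace of dimension $3$ over $\F_2$), any nonempty cyclic code already contains two distinct subspaces at distance at most $6$. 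This buys a clean, machine-free argument that also explains \emph{why} the search must come up empty, and it generalizes at once to any parameters with $d>2k$ (e.g. it covers $\mathcal{C}_2(8,8,3)$ but not $\mathcal{C}_2(8,6,3)$, where $d=2k$ is combinatorially possible and the paper's computation is really needed). One small slip: the formula you quote, $\dim U+\dim V-2\dim(U\cap V)$, is the subspace distance, not the injection distance (which would be $k-\dim(U\cap V)$); the parenthetical label is a misnomer, but since the paper uses the subspace metric and either metric gives a maximum below $8$, the argument is unaffected.
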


\begin{proof}
There is no orbit with minimum distance 8.
\end{proof}

\subsection{Calculating the number  $\mathcal{C}_2(8,4,2)$}
It follows from  Theorem \ref{theorem1}  that 
\[\mathcal{C}_2(8,4,2)=255\alpha_1+85\alpha_2+17\alpha_4.\]

\begin{theorem}
Let $\codeC \subseteq G_2(8,2)$ a cyclic code with minimum distance of 4. Then  
 
\begin{enumerate}[(1)]
\item $\alpha_1 \leq 0$;
\item $\alpha_2 \leq 1$;
\item $\alpha_4 \leq 0$.
\end{enumerate}
\end{theorem}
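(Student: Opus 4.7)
The plan is to reduce everything to a partial-spread question and then handle the three orbit sizes one at a time. For $k=2$ the subspace distance formula gives
\[
d(U,V) = \dim U + \dim V - 2\dim(U\cap V) = 4 - 2\dim(U\cap V),
\]
so any cyclic code $\codeC\subseteq G_2(8,2)$ with minimum distance $4$ is exactly a collection of $2$-dimensional subspaces of $\F_2^8$ that pairwise intersect trivially, i.e.\ a \emph{partial spread}. The classical packing bound then gives $|\codeC| \leq (2^8-1)/(2^2-1) = 85$, with equality precisely for a full spread (which exists via the Desarguesian construction coming from the tower $\F_2 \subset \F_{2^2} \subset \F_{2^8}$).

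From this bound the first two claims follow immediately. For $(1)$, a single orbit of size $255$ would already give $255 > 85$ pairwise disjoint $2$-dim subspaces, which is impossible, so no orbit of size $255$ can have minimum distance $4$, and $\alpha_1 \leq 0$. For $(2)$, an orbit of size $85$ with minimum distance $4$ must saturate the spread bound and hence \emph{be} a full spread; two distinct such orbits would yield $170$ pairwise disjoint $2$-dim subspaces, again exceeding $85$, so at most one such orbit may appear in the code, i.e.\ $\alpha_2 \leq 1$.

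For $(3)$ the argument is structural. By Lemma 9 of \cite{Etzion1}, an orbit of size $17 = (2^8-1)/(2^4-1)$ corresponds to $t=4$, meaning the stabilizer of any representative $V$ contains $\F_{2^4}^\ast$. Equivalently, $V$ is an $\F_{2^4}$-subspace of $\F_{2^8}$, forcing $\dim_{\F_2} V$ to be a multiple of $4$. Since $\dim_{\F_2} V = 2$, no orbit of size $17$ exists in $G_2(8,2)$ in the first place, and $\alpha_4 \leq 0$ holds vacuously.

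The proof has no serious obstacle once the partial-spread reformulation is observed; the only sanity check will be confirming that a spread orbit of size $85$ actually occurs (so $(2)$ is tight), and this is witnessed by the orbit of any $1$-dimensional $\F_{2^2}$-subspace of $\F_{2^8}$. Consistently with the methodology of the preceding sections, the same three bounds are produced mechanically by Algorithm~2 applied to $(n,q,k,d) = (8,2,2,4)$ for each $t \in \{1,2,4\}$, which is the route the rest of the paper would most likely use in the formal write-up.
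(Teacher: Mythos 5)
Your proof is correct, but it takes a genuinely different route from the paper. The paper's own proof is a one-line computational assertion obtained from the orbit enumeration (Algorithm 2 run with $(n,q,k,d)=(8,2,2,4)$): it simply reports that there is exactly one orbit with $85$ subspaces, that it has minimum distance $4$, and that no orbit of any other size has minimum distance $4$. You instead argue theoretically: since $k=2$, distance $4$ forces pairwise trivial intersections, so any such code is a partial spread and the packing bound $|\codeC|\le (2^8-1)/(2^2-1)=85$ immediately kills $\alpha_1$ and caps $\alpha_2$ at $1$; for $\alpha_4$ you invoke the stabilizer-subfield fact behind \cite[Lemma 9]{Etzion1} to note that an orbit of size $17$ would force $V$ to be an $\F_{2^4}$-subspace, impossible when $\dim_{\F_2}V=2$, so such orbits do not exist at all in $G_2(8,2)$. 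Your argument is self-contained, machine-independent, and in fact proves slightly more (it explains \emph{why} the computer finds no qualifying orbits of sizes $255$ and $17$, and identifies the unique size-$85$ orbit as the Desarguesian spread), whereas the paper's approach buys uniformity with the rest of its case analysis, where the bounds genuinely require exhaustive clique computations rather than a clean counting argument. The only cosmetic remark is that your observation that a size-$85$ orbit must be a full spread is not needed for the bound $\alpha_2\le 1$; the inequality $85\alpha_2\le|\codeC|\le 85$ already suffices.
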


\begin{proof}
There is only one orbit with 85 subspaces, and it has a minimum distance of 4. There are no orbits of other sizes with minimum distance 4.
\end{proof}

\begin{theorem}
There is a cyclic code with 85 codewords. That is,  $\alpha_2=1$.
\end{theorem}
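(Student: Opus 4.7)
The plan is to exhibit a single orbit of size $85$, which by the preceding theorem is the only way to realize a nonzero cyclic code in $G_2(8,2)$ with minimum distance $4$. An orbit of cardinality $\frac{2^8-1}{2^2-1}=85$ under the action of $\F_{2^8}^\ast$ must, by orbit--stabilizer, have stabilizer of order $3$, and that stabilizer is forced to be $\F_4^\ast \subset \F_{2^8}^\ast$. Consequently the representative subspace $V$ should be a one-dimensional $\F_4$-subspace of $\F_{2^8}$, and the simplest choice is $V = \F_4$ itself.

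Concretely, I would fix $\alpha$ to be a primitive root of $x^8+x^4+x^3+x^2+1$ as in the rest of the section. Then $\alpha^{85}$ has multiplicative order $3$ (since $85=(2^8-1)/3$), so $\F_4^\ast = \{\alpha^0,\alpha^{85},\alpha^{170}\}$. I take $\codeC$ to be the set of all cyclic shifts of $\{\alpha^0,\alpha^{85},\alpha^{170}\}$, i.e.\ $\orb(V)$ for $V=\F_4$. Two things remain to be checked. First, $|\orb(V)|=85$: any $\beta \in \F_{2^8}^\ast$ with $\beta V = V$ sends $1 \in V$ into $V\setminus\{0\}=\F_4^\ast$, so $\beta \in \F_4^\ast$; thus $\mathrm{Stab}(V)=\F_4^\ast$ and the orbit has $(2^8-1)/3=85$ elements. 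Second, the minimum distance is $4$: if distinct shifts $\beta V, \gamma V$ shared a nonzero vector, then $\beta v_1=\gamma v_2$ for some $v_1,v_2\in V\setminus\{0\}$, forcing $\beta\gamma^{-1}=v_2 v_1^{-1}\in \F_4^\ast=\mathrm{Stab}(V)$ and hence $\beta V = \gamma V$, a contradiction. So distinct shifts meet trivially, giving subspace distance $2+2-0=4$.

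There is no real obstacle here: the construction is forced by the upper bound $\alpha_2 \le 1$ from the previous theorem, and the verification is an orbit--stabilizer computation completely analogous to the construction of the $21$-element code realizing $\mathcal{C}_2(6,4,2)=21$ earlier in the paper. The only point to double-check computationally is that $\alpha^{85}$ truly represents an element of order $3$ in the chosen model of $\F_{2^8}$, but this is automatic from the primitivity of $\alpha$ and the divisibility $3 \mid 2^8-1$.
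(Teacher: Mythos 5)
Your construction is exactly the paper's: both exhibit the single orbit of the subfield $\F_4$, i.e.\ all cyclic shifts of $\{\alpha^0,\alpha^{85},\alpha^{170}\}$ with $\alpha$ a primitive root of $x^8+x^4+x^3+x^2+1$, as an $[8,85,4,2]$-cyclic code. The only difference is that you spell out the orbit--stabilizer count and the trivial-intersection argument for the distance, which the paper asserts (via computation) rather than proves; your verification is correct.
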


\begin{proof}
Let $\alpha$ be a primitive root of $x^8+x^4+x^3+x^2+1$ and use this polynomial to generate the field $\F_{2^{8}}$. Let $\codeC \subseteq G_2(8,2)$  which consists of all cyclic shifts of
\[\{ \alpha^{0}, \alpha^{85}, \alpha^{170}\}.\]
This code  $\codeC$ is an $[8,85,4,2]$-cyclic code.  It consists of a unique orbit.
\end{proof}

\begin{theorem}
 $\mathcal{C}_2(8,4,2) = 85$.
\end{theorem}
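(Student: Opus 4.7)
The plan is to combine the two immediately preceding results, which together pin down the value exactly. The expression
\[
\mathcal{C}_2(8,4,2) = 255\alpha_1 + 85\alpha_2 + 17\alpha_4
\]
from Theorem \ref{theorem1} reduces the problem to maximizing the non-negative integer coefficients $\alpha_1, \alpha_2, \alpha_4$ subject to the orbit-graph constraints.

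First, I would invoke the bound theorem proved just before: for any cyclic code $\codeC \subseteq G_2(8,2)$ with minimum distance $4$, we have $\alpha_1 \leq 0$, $\alpha_2 \leq 1$, and $\alpha_4 \leq 0$. Substituting into the equation above, this forces
\[
\mathcal{C}_2(8,4,2) \leq 255 \cdot 0 + 85 \cdot 1 + 17 \cdot 0 = 85,
\]
so the upper bound is immediate and requires no further computation.

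For the matching lower bound, I would point to the construction theorem immediately above, which exhibits an explicit $[8,85,4,2]$-cyclic code, namely the orbit of $\{\alpha^0,\alpha^{85},\alpha^{170}\}$ under cyclic shifts in $\F_{2^8}$ generated by a primitive root of $x^8+x^4+x^3+x^2+1$. This realizes the case $\alpha_2 = 1$ (with $\alpha_1 = \alpha_4 = 0$) and gives $\mathcal{C}_2(8,4,2) \geq 85$.

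Combining the two inequalities yields $\mathcal{C}_2(8,4,2) = 85$, which is the desired equality. There is essentially no obstacle here: the real work has already been absorbed into the previous bound theorem (which rests on the orbit-graph computation described in Algorithm 2) and the previous existence theorem (which supplies the explicit codewords). This final statement is simply the packaging of those two results into a single equality.
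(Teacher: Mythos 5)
Your proposal is correct and matches the paper's (implicit) argument exactly: the paper establishes the upper bound $85$ via the preceding theorem bounding $\alpha_1,\alpha_2,\alpha_4$ together with the decomposition from Theorem \ref{theorem1}, and the lower bound via the explicit orbit of $\{\alpha^0,\alpha^{85},\alpha^{170}\}$. Nothing further is needed.
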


\begin{remark}
Results table for $n=8$ and $q=2$.

 \begin{center}
 \begin{table}
 $\begin{array}{|c|c|c|c|}
 \hline
 d\setminus k & 2 & 3 & 4\\
 \hline
 4 & 85 & 1275 & 4590\\
 \hline
 6 & 0 & 0 & 0\\
 \hline
 8 & 0 & 0 & 17\\ 
 \hline
 \end{array}$\vskip0.5cm
 \caption{Tabla para $\mathcal{C}_2(8,d,k)$}
  \end{table}
 \end{center}
\end{remark}

\section{Classification of binary Grassmannian codes of length 9}

\subsection{Calculating the number  $\mathcal{C}_2(9,4,3)$}
It follows from  Theorem \ref{theorem1}  that 
\[\mathcal{C}_2(9,4,3)=511\alpha_1+73\alpha_3.\]

\begin{theorem}
Let $\codeC \subseteq G_2(9,3)$ a cyclic code with minimum distance 4. Then  
\begin{enumerate}[(1)]
\item $\alpha_1 \leq 11$;
\item $\alpha_3 \leq 1$.
\end{enumerate}
\end{theorem}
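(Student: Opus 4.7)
The plan is to invoke Algorithm 2 with parameters $n=9$, $k=3$, $q=2$, $d=4$, applied separately for each divisor $t$ of $n=9$. Since the orbits of $\F_{2^9}^\ast$ on $G_2(9,3)$ have sizes of the form $\tfrac{2^9-1}{2^t-1}$ with $t\mid 9$, only the sizes $511$ (for $t=1$) and $73$ (for $t=3$) are possible in the nontrivial case; the case $t=9$ would require a $3$-space fixed by all of $\F_{2^9}^\ast$, which cannot occur. This matches the decomposition $\mathcal{C}_2(9,4,3)=511\alpha_1+73\alpha_3$ already recorded, so the two bullets account for all contributions.

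To handle (2), I would first enumerate every orbit in $G_2(9,3)$ of size $73$, retain those whose internal minimum subspace distance is at least $4$, and then build the associated compatibility graph. The claim $\alpha_3\leq 1$ will follow from observing that the restricted graph has clique number exactly one; in particular, at most one orbit of size $73$ with minimum distance $4$ can appear in any cyclic code with $d=4$, either because only one such orbit exists or because no two are mutually compatible at distance $4$. This is the lighter of the two computations since the collection of candidate orbits is a factor of roughly seven smaller.

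For (1), I would repeat the construction restricted to orbits of size $511$ with internal minimum distance at least $4$, forming the graph $\mathcal{G}_1$ whose edges link pairs of such orbits whose union still has minimum distance $4$. Computing the clique number $\omega(\mathcal{G}_1)$ directly, via the GAP/Java/Mathematica pipeline described in the Remark of Section 2, then yields $\omega(\mathcal{G}_1)=11$, which gives $\alpha_1\leq 11$. Note that because the two contributions to $\mathcal{C}_2(9,4,3)$ are controlled by disjoint subgraphs (they must still be checked for cross-compatibility in the subsequent existence theorem, but not for the individual upper bounds stated here), the two estimates can be produced independently.

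The main obstacle is computational: the Gaussian binomial $\binom{9}{3}_2=\tfrac{(2^9-1)(2^8-1)(2^7-1)}{(2^3-1)(2^2-1)(2-1)}$ is sizable, so the total orbit count and, more seriously, the edge set of $\mathcal{G}_1$, grow quickly; maximum-clique extraction is NP-hard in general. In practice the bottleneck lies both in the pairwise distance filtering that defines the edges of $\mathcal{G}_1$ and in the clique enumeration on the resulting graph. Verification that the clique of size $11$ truly is maximal, rather than merely maximum among those easily found, is the delicate point, and is the step for which we rely on the Mathematica implementation mentioned in Section 2.
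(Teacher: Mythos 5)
Your proposal follows essentially the same route as the paper: run the clique computation of Algorithm~2 separately on the orbits of size $511$ and $73$, obtaining clique number $11$ for the former and the trivial bound for the latter. The paper's stated justification for $\alpha_3\leq 1$ is the first of your two alternatives --- there is exactly one orbit of size $73$ (and it has minimum distance $6$) --- and for $\alpha_1\leq 11$ it likewise reports the computed maximum clique of $11$ orbits of size $511$, so the two arguments coincide.
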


\begin{proof}
There is only one orbit with 73 subspaces, and it has a minimum distance of 6. There is a clique of 11 orbits with 511 subspaces.
\end{proof}

 \begin{theorem}
 $\mathcal{C}_2(9,4,3) \leq 5621$
 \end{theorem}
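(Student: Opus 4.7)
The plan is to upgrade the separate bounds from the previous theorem into a joint bound via a case split on $\alpha_3$, together with one mixed-clique computation in the graph $\mathcal{G}$ constructed in Section~2. Note first that a naive substitution of both maxima into the counting identity gives $511 \cdot 11 + 73 \cdot 1 = 5694$, which overshoots the claim; so the argument must rule out the simultaneous attainment of $\alpha_1 = 11$ and $\alpha_3 = 1$.

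First I would dispose of the case $\alpha_3 = 0$: here $\mathcal{C}_2(9,4,3) = 511\alpha_1 \leq 511 \cdot 11 = 5621$, which matches the asserted bound with equality and is in fact realized by the already-known 11-clique among the 511-orbits of minimum distance at least 4.

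Second I would treat the case $\alpha_3 = 1$, where the unique 73-orbit of minimum distance 6 must lie in the code. Applying Algorithm 2 to the subgraph of $\mathcal{G}$ induced by this 73-orbit together with the 511-orbits adjacent to it, I would verify that every clique of $\mathcal{G}$ containing the 73-orbit vertex contains at most ten 511-orbit vertices. This forces $\alpha_3 = 1 \Rightarrow \alpha_1 \leq 10$, whence $\mathcal{C}_2(9,4,3) \leq 511 \cdot 10 + 73 = 5183 \leq 5621$. Taking the maximum of the two cases yields the claimed bound.

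The main obstacle is the computational clique analysis in the second case. Conceptually, one must demonstrate that attaching the 73-orbit as a new vertex to any of the 11-cliques sitting among the 511-orbits destroys adjacency with at least one member of that clique, i.e.\ that $\mathcal{G}$ admits no clique of the combined shape ``the 73-orbit together with eleven 511-orbits.'' This is precisely the kind of cross-orbit-size check that the clique framework of Section~2 was set up to perform, and it is carried out with the same GAP/Java/Mathematica pipeline invoked in the preceding sections.
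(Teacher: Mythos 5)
Your instinct that naive substitution of the previous theorem's bounds gives $511\cdot 11+73\cdot 1=5694$ rather than $5621$ is correct, but the repair you propose does not work. The pivotal step of your case analysis --- the computational claim that no clique of $\mathcal{G}$ consists of the $73$-orbit together with eleven $511$-orbits, so that $\alpha_3=1$ would force $\alpha_1\leq 10$ --- is contradicted by the paper itself: two results later an explicit cyclic code with $5694$ codewords is exhibited, consisting of eleven orbits of $511$ subspaces together with the orbit of $73$ subspaces, i.e.\ $\alpha_1=11$ and $\alpha_3=1$ simultaneously, and the results table for $n=9$ records $\mathcal{C}_2(9,4,3)=5694$. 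Hence the mixed-clique check you would delegate to the GAP/Java/Mathematica pipeline would come back with the opposite answer, and the inequality $\mathcal{C}_2(9,4,3)\leq 5621$ cannot be rescued by this (or any) argument, since the paper's own construction violates it.

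For comparison, the paper's proof of the statement is only the remark that it ``follows directly from the previous theorem'' ($\alpha_1\leq 11$, $\alpha_3\leq 1$), which in fact yields no more than $\mathcal{C}_2(9,4,3)\leq 5694$; the figure $5621=511\cdot 11$ is what you get by ignoring the $73$-orbit, and the stated bound appears to be an error or typo in the paper rather than a theorem with a genuine proof. So your criticism that $5621$ does not follow from the quoted lemma is well taken, but the extra clique restriction you introduce to close that gap is false, which leaves your proposed proof with an unfixable hole at exactly the point where it departs from the paper.
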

 
\begin{proof}
It follows directly from previous theorem.
\end{proof}
 
\begin{theorem}
There is a cyclic code with 5621 codewords. That is,  $\alpha_1=11$.
\end{theorem}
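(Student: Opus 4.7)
The plan is to exhibit eleven explicit representatives $V_1,\ldots,V_{11}\in G_2(9,3)$, each generating an orbit of size $511=(2^9-1)/(2-1)$, whose pairwise union has minimum subspace distance at least $4$. Theorem~\ref{theorem1} then produces a cyclic code of size $11\cdot 511 = 5621$, matching the upper bound $\alpha_1\leq 11$ from the preceding lemma and forcing $\alpha_1=11$.

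First, I would fix a primitive polynomial $p(x)$ of degree $9$ over $\F_2$ (for instance $x^9+x^4+1$) and let $\alpha$ be one of its roots, so that $\alpha$ is a primitive element of $\F_{2^9}$. Each $3$-dimensional $\F_2$-subspace $V$ of $\F_{2^9}$ is then recorded by the seven nonzero elements it contains, listed as exponents of $\alpha$, following the compact notation used in the preceding sections. The orbit of $V$ under $V\mapsto \alpha V$ has size $(2^9-1)/(2^t-1)$ for some $t\mid 9$, that is $t\in\{1,3,9\}$, and has size $511$ exactly when the stabilizer of $V$ in $\F_{2^9}^\ast$ is trivial.

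The previous lemma already establishes, by running Algorithm~2 with $t=1$, that the graph $\mathcal{G}$ whose vertices are the orbits of size $511$ with intra-orbit distance at least $4$, and whose edges join pairs of such orbits whose union still has minimum distance at least $4$, contains a clique of size $11$. The task here is to exhibit one such clique explicitly. I would therefore rerun the pipeline from Section~2 (GAP for subspace enumeration, Java for orbit construction and graph assembly, Wolfram Mathematica for clique extraction), select one $11$-clique from the output, and print its eleven orbit leaders $V_1,\ldots,V_{11}$ in the exponent notation used in the earlier theorems of this paper.

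The main obstacle is purely computational: the number of $3$-dimensional subspaces of $\F_2^9$ is the Gaussian binomial $\binom{9}{3}_2$, which is on the order of $10^5$, so even after factoring by the cyclic action to orbits of size $511$ the graph $\mathcal{G}$ has on the order of a thousand vertices, and clique enumeration is worst-case exponential; feasibility depends on the branch-and-bound performance of the clique solver and on pruning vertices whose own intra-orbit distance already falls below $4$. Once the eleven representatives are produced, verification is routine: since cyclic shift is an isometry for the subspace metric $d(X,Y)=\dim X+\dim Y-2\dim(X\cap Y)$, the inequality $d(\alpha^i V_s,\alpha^j V_t)\geq 4$ for all $(i,s)\neq(j,t)$ reduces to a finite check on the eleven base subspaces and their shifts, which the same pipeline carries out. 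This yields $\mathcal{C}_2(9,4,3)\geq 5621$ and hence, combined with the preceding upper bound, the equality $\alpha_1=11$.
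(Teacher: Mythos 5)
Your route is the same one the paper takes: a computational clique search among the orbits of size $511$ in $G_2(9,3)$, yielding eleven orbits whose union has minimum distance $4$ and hence a cyclic code with $11\cdot 511=5621$ codewords. The difference is that you never actually deliver the witness. The statement is an existence claim, and the paper's entire proof consists of exhibiting the eleven orbit representatives explicitly (as lists of seven powers of a primitive element $\alpha$ of $\F_{2^9}$); your proposal only promises to ``rerun the pipeline and print'' such a clique. As it stands, the existence of an $11$-clique is not something you can extract from the preceding theorem as stated, because its conclusion is only the inequality $\alpha_1\leq 11$: an upper bound on the clique number does not by itself guarantee that a clique of size $11$ exists. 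That information sits in the computational content of that theorem's proof (the clique-number computation of Algorithm 2 returning exactly $11$), not in its statement, so leaning on ``the previous lemma already establishes'' existence is a genuine gap in the written argument.

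To close it you must do one of two things: either exhibit eleven concrete representatives $V_1,\dots,V_{11}$ (as the paper does) together with the routine finite verification that each orbit has $511$ subspaces and that all pairwise distances, within and between orbits, are at least $4$; or explicitly cite the exact clique-number computation (clique number equal to $11$, not merely at most $11$) as the source of existence. Your surrounding remarks are fine: cyclic shifts are isometries, so the distance check does reduce to finitely many comparisons involving the base subspaces, and an orbit has size $511$ precisely when its stabilizer in $\F_{2^9}^\ast$ is trivial. Also note that the upper bound $\alpha_1\leq 11$ plays no role in proving this particular theorem; it is only needed later to conclude that $5621$ (respectively $5694$, once the $73$-subspace orbit is added) is optimal.
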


\begin{proof}
Let $\alpha$ be a primitive root of $x^9+...$ and use this polynomial to generate the field $\F_{2^{9}}$. Let $\codeC \subseteq G_2(9,3)$   which consists of all cyclic shifts of
\begin{align*}
\{ \alpha^{0}, \alpha^{26}, \alpha^{27}, \alpha^{142}, \alpha^{156}, \alpha^{276}, \alpha^{345}\}\\
\{ \alpha^{0}, \alpha^{86}, \alpha^{162}, \alpha^{169}, \alpha^{229}, \alpha^{237}, \alpha^{247}\}\\
\{ \alpha^{0}, \alpha^{33}, \alpha^{81}, \alpha^{110}, \alpha^{181}, \alpha^{305}, \alpha^{379}\}\\
\{ \alpha^{0}, \alpha^{2}, \alpha^{93}, \alpha^{96}, \alpha^{154}, \alpha^{260}, \alpha^{304}\}\\
\{ \alpha^{0}, \alpha^{28}, \alpha^{127}, \alpha^{232}, \alpha^{248}, \alpha^{268}, \alpha^{311}\}\\
\{ \alpha^{0}, \alpha^{25}, \alpha^{56}, \alpha^{90}, \alpha^{109}, \alpha^{227}, \alpha^{281}\}\\
\{ \alpha^{0}, \alpha^{133}, \alpha^{174}, \alpha^{185}, \alpha^{197}, \alpha^{277}, \alpha^{332}\}\\
\{ \alpha^{0}, \alpha^{21}, \alpha^{157}, \alpha^{194}, \alpha^{244}, \alpha^{306}, \alpha^{372}\}\\
\{ \alpha^{0}, \alpha^{73}, \alpha^{170}, \alpha^{187}, \alpha^{219}, \alpha^{259}, \alpha^{289}\}\\
\{ \alpha^{0}, \alpha^{35}, \alpha^{123}, \alpha^{180}, \alpha^{218}, \alpha^{231}, \alpha^{356}\}\\
\{ \alpha^{0}, \alpha^{24}, \alpha^{131}, \alpha^{177}, \alpha^{290}, \alpha^{294}, \alpha^{299}\}
\end{align*}
This code  $\codeC$ is an $[8,5621,4,3]$-cyclic code.  It consists of eleven orbits with 511 subspaces.
\end{proof}

\begin{theorem}
There is a cyclic code with 5694 codewords. That is,  $\alpha_1=11, \alpha_3=1$.
\end{theorem}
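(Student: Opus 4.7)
The plan is to augment the $11$-orbit clique from the previous theorem with the unique orbit of size $73$ and verify that the union is still a constant-dimension code of minimum distance~$4$. Since $11\cdot 511 + 1\cdot 73 = 5694$, this would realise $\alpha_1=11$ and $\alpha_3=1$ simultaneously.

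First, I would identify the relevant orbit of size $73$. By the formula $|\orb(V)| = (2^9-1)/(2^t-1)$ recalled in the introduction, an orbit in $G_2(9,3)$ has exactly $73$ subspaces iff $t=3$, i.e.\ iff $V$ is stabilised by $\F_{2^3}^*\subseteq \F_{2^9}^*$ under the cyclic action. Such a $V$ is a $1$-dimensional $\F_{2^3}$-subspace of $\F_{2^9}$, so we may concretely take $V = \F_{2^3}=\langle \alpha^0,\alpha^{73},\alpha^{146}\rangle_{\F_2}$, viewed inside $\F_{2^9}$ via the fixed primitive root $\alpha$. The earlier lemma on $\mathcal{C}_2(9,4,3)$ already records that this orbit has internal minimum distance $6$, so in particular $\geq 4$.

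Second, the eleven orbits of size $511$ from the preceding theorem already form a clique in the orbit graph, so their internal distances and pairwise cross-distances are all $\geq 4$. Consequently the only new verifications required are the $11$ cross-distance inequalities
\[
D(\orb(V),\orb(W_i)) \geq 4, \qquad i=1,\dots,11,
\]
where $W_1,\dots,W_{11}$ are the generators listed in the previous theorem. This is exactly the edge predicate used in Algorithm~1 applied to the pair $(\orb(V),\orb(W_i))$, and it can be checked mechanically with the GAP/Java/Mathematica pipeline already described in Section~2.

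The main obstacle is therefore not conceptual but computational: one has to confirm that adding the subfield orbit $\orb(\F_{2^3})$ to the eleven previously-selected $511$-orbits does not introduce any pair of subspaces at distance less than $4$. Once this is verified, the resulting union is a cyclic $[9,5694,4,3]_2$-code, which yields the claim $\alpha_1=11$, $\alpha_3=1$.
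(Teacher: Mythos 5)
There is a genuine gap. Your argument is conditional: you correctly identify the size-$73$ orbit as the subfield orbit $\orb(\F_{2^3})$ with generator $\{\alpha^{0},\alpha^{73},\alpha^{146},\ldots,\alpha^{438}\}$ and internal distance $6$, and you correctly reduce the problem to the eleven cross-distance checks $D(\orb(\F_{2^3}),\orb(W_i))\geq 4$ (equivalently, that no shift of the subfield meets a shift of $W_i$ in a $2$-dimensional subspace). But you never establish that these checks succeed for the specific eleven $511$-orbits of the preceding theorem; you only say they ``can be checked mechanically.'' Since the entire content of the statement is the existence of a compatible configuration realising $\alpha_1=11$ and $\alpha_3=1$ simultaneously, leaving this verification open means the existence claim is not proved. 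The bound $\alpha_1\le 11$, $\alpha_3\le 1$ from the earlier lemma does not imply both can be attained at once, so some explicit certificate (or verified computation) is indispensable.

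Moreover, your particular plan --- reuse the eleven $511$-orbits from the $5621$-codeword theorem and append the subfield orbit --- is not what the paper does, and there is evidence it may fail: the paper's certificate for $5694$ consists of a largely \emph{different} collection of eleven $511$-orbit representatives (only the orbit generated by $\{\alpha^{0},\alpha^{24},\alpha^{131},\alpha^{177},\alpha^{290},\alpha^{294},\alpha^{299}\}$ is visibly common to both lists), together with the subfield orbit. Had the earlier clique simply extended, the authors would presumably have reused it. So to close the gap you must either exhibit and verify an explicit $12$-orbit certificate (as the paper does, by listing the twelve generators and checking all internal and pairwise distances computationally), or prove that your chosen clique is compatible with $\orb(\F_{2^3})$ --- neither of which your sketch provides, and the second of which may be false for the clique you chose.
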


\begin{proof}
Let $\alpha$ be a primitive root of $x^9+...$ and use this polynomial to generate the field $\F_{2^{9}}$. Let $\codeC \subseteq G_2(9,3)$  which consists of all cyclic shifts of

\begin{align*}
\{ \alpha^{0}, \alpha^{64}, \alpha^{144}, \alpha^{242}, \alpha^{313}, \alpha^{381}, \alpha^{382}\}\\
\{ \alpha^{0}, \alpha^{23}, \alpha^{84}, \alpha^{190}, \alpha^{202}, \alpha^{335}, \alpha^{337}\}\\
\{ \alpha^{0}, \alpha^{27}, \alpha^{41}, \alpha^{52}, \alpha^{83}, \alpha^{142}, \alpha^{161}\}\\
\{ \alpha^{0}, \alpha^{15}, \alpha^{91}, \alpha^{94}, \alpha^{126}, \alpha^{166}, \alpha^{322}\}\\
\{ \alpha^{0}, \alpha^{45}, \alpha^{53}, \alpha^{63}, \alpha^{225}, \alpha^{263}, \alpha^{310}\}\\
\{ \alpha^{0}, \alpha^{26}, \alpha^{33}, \alpha^{93}, \alpha^{181}, \alpha^{276}, \alpha^{304}\}\\
\{ \alpha^{0}, \alpha^{102}, \alpha^{141}, \alpha^{184}, \alpha^{206}, \alpha^{316}, \alpha^{397}\}\\
\{ \alpha^{0}, \alpha^{57}, \alpha^{108}, \alpha^{157}, \alpha^{227}, \alpha^{244}, \alpha^{281}\}\\
\{ \alpha^{0}, \alpha^{74}, \alpha^{223}, \alpha^{239}, \alpha^{259}, \alpha^{289}, \alpha^{351}\}\\
\{ \alpha^{0}, \alpha^{6}, \alpha^{103}, \alpha^{158}, \alpha^{192}, \alpha^{308}, \alpha^{329}\}\\
\{ \alpha^{0}, \alpha^{24}, \alpha^{131}, \alpha^{177}, \alpha^{290}, \alpha^{294}, \alpha^{299}\}\\
\{ \alpha^{0}, \alpha^{73}, \alpha^{146}, \alpha^{219}, \alpha^{292}, \alpha^{365}, \alpha^{438}\}
\end{align*}

This code  $\codeC$ is an $[8,5694,4,3]$-cyclic code.  It consists of eleven orbits with 511 subspaces and the remaining orbit with 73 subspaces.
\end{proof}

\subsection{Calculating the number  $\mathcal{C}_2(9,4,2)$}

\begin{theorem}
 $\mathcal{C}_2(9,4,2) = 0$.
\end{theorem}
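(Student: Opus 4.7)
The plan is to show that no orbit of $G_2(9,2)$ has minimum distance $4$, which suffices because every cyclic Grassmannian code is, by definition, a union of orbits.

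First, I would cut down the possible orbit sizes. By Theorem~\ref{theorem1}, $|\orb(V)| = \frac{2^9-1}{2^t-1}$ for some $t \mid 9$, so a priori $t \in \{1,3,9\}$. The stabilizer of $V$ in $\F_{2^9}^*$ is closed under multiplication, contains $\F_2^*$, and together with $0$ is closed under addition (if $\alpha V = \beta V = V$, then $(\alpha+\beta)v = \alpha v + \beta v \in V$ for every $v \in V$); hence it equals $\F_{2^t}^* \subseteq \F_{2^9}^*$ for some $t \mid 9$, and $V$ carries the structure of an $\F_{2^t}$-vector space. This forces $t \mid \dim_{\F_2} V = 2$, so $t \mid \gcd(9,2) = 1$. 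Therefore every orbit in $G_2(9,2)$ has size exactly $511$ and trivial stabilizer.

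Next, I would produce, in any such orbit, two distinct subspaces at distance at most $2$. Pick an $\F_2$-basis $v_1, v_2$ of $V$ and set $\alpha := v_2 v_1^{-1} \in \F_{2^9}^*$. Since $v_1 \neq v_2$, we have $\alpha \neq 1$, and triviality of the stabilizer gives $\alpha V \neq V$. On the other hand $\alpha v_1 = v_2 \in V$, so $v_2 \in V \cap \alpha V$, whence
\[
d(V, \alpha V) \;=\; 2k - 2\dim_{\F_2}(V \cap \alpha V) \;\leq\; 4 - 2 \;=\; 2.
\]
Thus the minimum distance of $\orb(V)$ is at most $2$, for every $V \in G_2(9,2)$.

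Combining the two steps: every non-empty cyclic code $\codeC \subseteq G_2(9,2)$ contains at least one full orbit, and that orbit already furnishes a pair of distinct codewords at distance $\leq 2$. Hence no cyclic code achieves minimum distance $4$, giving $\mathcal{C}_2(9,4,2) = 0$. I do not expect a genuine obstacle here; the only nontrivial input is the ``stabilizer is a subfield'' observation that pins the orbit length to $511$, after which the explicit choice $\alpha = v_2/v_1$ immediately wrecks the distance.
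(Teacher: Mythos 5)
Your proof is correct, but it proceeds by a genuinely different route than the paper. The paper's own proof is a one-line appeal to its exhaustive computation: the orbits of $G_2(9,2)$ and their minimum distances were enumerated by machine (GAP/Java/Mathematica, per the algorithm of Section 2), and none was found with minimum distance $4$. You instead give a self-contained, computer-free argument: the stabilizer of $V$ in $\F_{2^9}^\ast$, together with $0$, is a subfield $\F_{2^t}$ with $t\mid 9$, and since $V$ becomes an $\F_{2^t}$-space also $t\mid 2$, forcing $t=1$; then the choice $\alpha=v_2v_1^{-1}$ for a basis $v_1,v_2$ of $V$ produces a distinct shift $\alpha V\neq V$ meeting $V$ nontrivially, so every orbit has minimum distance at most $2k-2=2$, and no nonempty cyclic code (being a union of orbits of size $511\geq 2$) can have minimum distance $4$. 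Both steps are sound (the only small point worth spelling out is that $(\alpha+\beta)V\subseteq V$ upgrades to equality by injectivity of multiplication when $\alpha+\beta\neq 0$). What your approach buys is generality and verifiability: the same argument shows $\mathcal{C}_q(n,2k,k)=0$ whenever $k\nmid n$, explaining the $0$ entries of this kind in the paper's tables without any search; what the paper's approach buys is uniformity with the rest of its computational classification, where the same orbit/clique machinery also yields the nonzero values.
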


\begin{proof}
There is no orbit with minimum distance 4.
\end{proof}

\begin{remark}
Results table for $n=9$ and $q=2$.
 \begin{center}
 \begin{table}[h]
$\begin{array}{|c|c|c|c|}
\hline
d\setminus k & 2 & 3 & 4\\
\hline
4 & 0 & 5694 & 25500-50589 \\
\hline
6 & 0 & 73 & 511 \\
\hline
8 & 0 & 0 & 0\\ 
\hline
\end{array}$\vskip0.5cm
 \caption{Tabla para $\mathcal{C}_2(9,d,k)$}
  \end{table}
 \end{center}
\end{remark}


\begin{thebibliography}{99}
	
\bibitem{Etzion1} E. Ben-Sasson, T. Etzion, A. Gabizon and N. Raviv. Subspace Polynomials and Cyclic Subspace Codes. arXiv:1404.7739, 2015.

\bibitem{Chen1}  B. Chen, and H. Liu.  Constructions of cyclic constant dimension codes.  Designs, Codes and Cryptography, Vol 86, pages 1267 - 1279,  2018.

\bibitem{Etzion2} T. Etzion and A. Vardy. Error-Correcting Codes in Projective Space. IEEE Transactions on Information Theory, Volume 57 (2), pages 1165-1173, 2011.

\bibitem{Gluesing}  H. Gluesing-Luerssen, K. Morrison, and C. Troha.  Cyclic orbit codes and stabilizer subfields. Advances in Mathematics of Communications, vol 9, pages 177-197,  2015.

\bibitem{Gutierrez-Molina}  I. Gutierrez, and I. Molina.  Some constructions of cyclic and quasi-cyclic subspaces codes.  arXiv:1504.04553v4, 2015.

	
\bibitem{Kurz}  A. Kohnert, and S. Kurz.  Construction of Large Constant Dimension Codes with a Prescribed Minimum Distance.  Mathematical Methods in Computer Science. Lecture Notes in Computer Science, vol. 5393,  pages 31-42, 2008.


\bibitem{Otal1} K. Otal, and F. \"Ozbudak. Cyclic subspace codes via subspace polynomials.  Designs, Codes and Cryptography.  doi:10.1007/s10623-016-0297-1.  2016.

\bibitem{Rosenthal} A.-L. Trautmann, F. Manganiello, M. Braun, and J. Rosenthal. Cyclic orbit codes. IEEE Transactions on Information Theory. Vol 59, pages 7386-7404, 2013.	


\end{thebibliography}
\end{document}